\newtheorem{definition}{Definition}
\newtheorem{proposition}{Proposition}
\newtheorem{lemma}{Lemma}
\newcommand{\Tr}{\mathrm{Tr}}
\newcommand{\Id}{\mathds{1}}
\newcommand{\ket}[1]{\left|{#1}\right\rangle}
\newcommand{\sket}[1]{|{#1}\rangle}
\newcommand{\bra}[1]{\left\langle{#1}\right|}
\newcommand{\sbra}[1]{\langle{#1}|}
\newcommand{\proj}[1]{\left|{#1}\right\rangle\!\!\left\langle{#1}\right|}
\newcommand{\sproj}[1]{|{#1}\rangle\!\langle{#1}|}
\newcommand{\norm}[1]{\left|\left|#1\right|\right|}
\newcommand{\bnorm}[1]{\big|\big|#1\big|\big|}
\newcommand{\cD}{\mathcal{D}}
\newcommand{\cE}{\mathcal{E}}
\newcommand{\cF}{\mathcal{F}}
\newcommand{\cM}{\mathcal{M}}
\newcommand{\oM}{\bar{M}}
\newcommand{\ocM}{\bar{\cM}}
\newcommand{\oocM}{\overrightarrow{\cM}^\cm}
\newcommand{\hil}{\mathcal{H}}
\renewcommand{\a}{\mathsf{a}}
\renewcommand{\b}{\mathsf{b}}
\newcommand{\f}{\mathsf{f}}
\newcommand{\x}{\mathsf{x}}
\newcommand{\y}{\mathsf{y}}
\newcommand{\X}{\mathsf{X}}
\newcommand{\A}{\mathsf{A}}
\newcommand{\vac}{\t{vac}}
\renewcommand{\t}[1]{\mathrm{#1}}
\newcommand{\tot}{\!\otimes\!}
\newcommand{\cm}{{\checkmark}}
\newcommand{\nc}{\textup{\,\o\,}}
\newcommand{\nn}{\nonumber}
\begin{document}

\title{How post-selection affects device-independent claims under the fair sampling assumption}

\author{Davide Orsucci}
\affiliation{Quantum Optics Theory Group, Universität Basel, CH-4056 Basel, Switzerland}

\author{Jean-Daniel Bancal}
\affiliation{Quantum Optics Theory Group, Universität Basel, CH-4056 Basel, Switzerland}
\affiliation{D\'epartement de Physique Appliqu\'ee, Universit\'e de Gen\`{e}ve, CH-1211 Gen\`{e}ve, Switzerland}

\author{Nicolas Sangouard}
\affiliation{Quantum Optics Theory Group, Universität Basel, CH-4056 Basel, Switzerland}

\author{Pavel Sekatski}
\affiliation{Quantum Optics Theory Group, Universität Basel, CH-4056 Basel, Switzerland}


\begin{abstract}

Device-independent certifications employ Bell tests to guarantee the proper functioning of an apparatus from the sole knowledge of observed measurement statistics, i.e.\ without assumptions on the internal functioning of the devices. When these Bell tests are implemented with devices having too low efficiency, one has to post-select the events that lead to successful detections and thus rely on a fair sampling assumption. The question that we address in this paper is what remains of a device-independent certification under fair sampling. We provide an intuitive description of post-selections in terms of \emph{filters} and define the fair sampling assumption as a property of these filters, equivalent to the definition introduced in Ref.~\cite{Berry10}. When this assumption is fulfilled, the post-selected data is reproduced by an ideal experiment where lossless devices measure a \emph{filtered} state which can be obtained from the \emph{actual} state via local probabilistic maps. Trusted conclusions can thus be obtained on the quantum properties of this filtered state and the corresponding measurement statistics can reliably be used, e.g., for randomness generation or quantum key distribution. We also explore a stronger notion of fair sampling leading to the conclusion that the post-selected data is a fair representation of the data that would be obtained with lossless detections. Furthermore, we show that our conclusions hold in cases of small deviations from exact fair sampling. Finally, we describe setups previously or potentially used in Bell-type experiments under fair sampling and identify the underlying device-specific assumptions.

\end{abstract}

\maketitle

\section*{Introduction}
\label{sec:intro}

Measurement devices in many quantum experiments, and notably in quantum optics, have a finite non-unit efficiency: such devices may refuse to provide an outcome for the desired measurement, and produce a ``no-click'' event instead~\cite{Had09}. This lack of detection can usually be explained in simple physical terms, e.g., the photon to be measured is not always absorbed on the chip of the detector. In a simplified physical model of the apparatus, the occurrence of a no-click event is completely independent of the state of the quantum system to be measured. Based on this model, one is tempted to ignore the no-click events altogether and remove them from the measurement data, a.k.a.\ performing a \emph{post-selection}.

However, from a black-box perspective, a no-click event must be considered as a measurement outcome, just like the others\footnote{A non-detection event is obtained when a detector is supposed to click but, instead, it doesn't. These non-detections can be distinguished from the normal ``silent'' state of the device by means of a \textit{trigger event}: a classical signal indicating that a quantum state has been sent to the detector within a certain time window.}. 
This outcome has to be added to the alphabet corresponding to possible values of the measured quantity~\cite{CHSH69, CH74, Pearle70}. The action of ignoring these events is not always harmless: in the context of testing of Bell inequalities this is known as the \textit{detection loophole}, and there are explicit examples where the use of post-selection can lead to wrong claims about the performed Bell test, e.g.\ erroneously deducing the non-locality of a local model~\cite{Pearle70, GM87, MP03}. Such canny local models are not abstract theoretical constructions, but have been produced experimentally with simple optical elements and detectors that are commonly used~\cite{PS11, GL11, RG13, JE15, Jogenfors17}.

Nevertheless, an accurate physical understanding of the measurement device \textit{in the lab} may suggest that the no-click events are locally random and independent of system state and measurement settings. If this is really the case, data acquisition and post-selection is equivalent to directly generating the data with an \textit{ideal} measurement device that has unit efficiency but is otherwise the same as the detector \textit{in the lab}. Thus, any claim that could be made in the ideal setting carries over when using the experimental post-selected data. This logical step is known as \textit{fair sampling assumption} (or \emph{no-enhancement assumption}, at the origins~\cite{CHSH69, CH74}). It is important to realize that fair sampling is indeed an assumption and cannot be ensured from the measurement probabilities alone -- post-selected local models as mentioned above can reproduce those perfectly.

Thanks to the advancement in quantum technologies we have recently witnessed the first experimental violations of Bell inequalities free of loopholes~\cite{NoLoop1, NoLoop2, NoLoop3, NoLoop4, NoLoop5}, conclusively ruling out that locally causal models could explain the correlations observed in quantum experiments. In the aftermath of achieving this long-standing goal, renewed interest in Bell inequalities was spurred by their technological applications within the framework of \emph{device-independent quantum information processing}~\cite{MY03, SB19}. Namely, the sole violation of a Bell inequality can be used to certify devices, to guarantee the randomness of measurement results or the security of a quantum key distribution, without the need to know the internal functioning of the devices used to perform the Bell test~\cite{PABG07, PABG09, MYS12, Kan16, SBWS18}. Device-independent quantum information processing is at the verge of experimental feasibility~\cite{Ban18}. However, performing Bell tests devoid of the detection loophole is still challenging and, as such, most experiments at present still rely on fair sampling. This naturally raises the following question: if assessing fair sampling requires a detailed description of the way the measurement device works, what remains of the device-independent framework in scenarios based on Bell tests relying upon the fair sampling assumption? The aim of our work is to provide a concrete answer to this question.

We first explain how a lossy detector can be understood in terms of \textit{filters}: we represent a finite-efficiency measurement as a two-step process where first a filter is applied to the classical input and to the quantum state which either accepts or rejects them (corresponding, respectively, to a successful or failed detection) and subsequently a lossless measurement is performed on the filtered quantum state. This is schematically illustrated in panel (a) of Figure~\ref{Fig1}. An equivalent formulation based on the positive operator-valued measure (POVM) description of the measurement devices is also provided; this formulation allows to directly verify if a detector satisfies fair sampling, given a full specification of its behaviour.

\begin{figure}[t]
\begin{center}
	\includegraphics[scale=1.2]{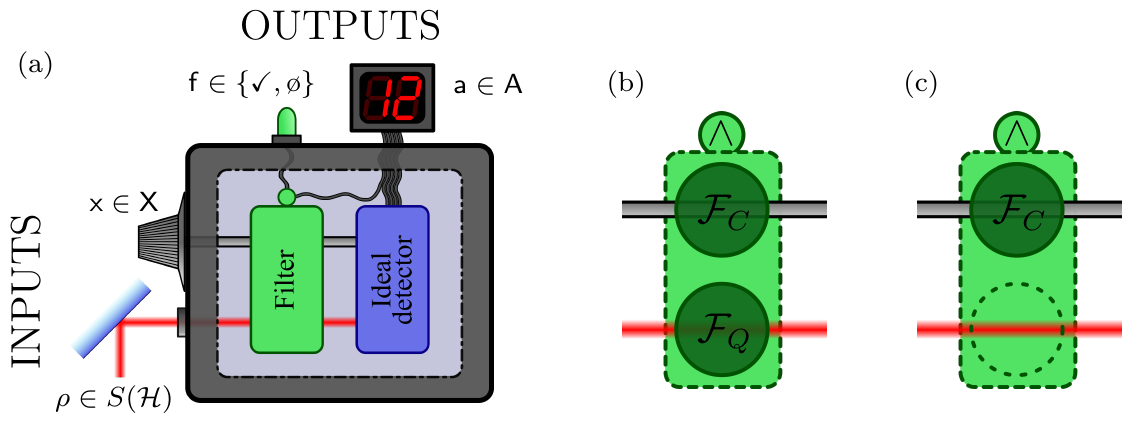}
	\vspace{-3mm}
\end{center}
\caption{Mathematical model of a lossy device. In panel (a) we illustrate that any device having finite efficiency can always be modelled as a \textit{filter} acting jointly on the classical setting $\x \in \X$ and on the input quantum state $\rho \in S(\hil)$, followed by  by an \textit{ideal lossless measurement device}. The filter returns a flag $\f \in \{\cm, \nc\}$ with a probability that depends on $\x$ and $\rho$, and the ideal measurement is performed only if $\f = \cm$. In panel (b) we illustrate the \emph{fair sampling} condition: the filter factorizes in two components $\cF_C$ and $\cF_Q$ which act independently on the classical and quantum input, respectively, and a detection occurs only when both accept (indicated by a $\wedge \equiv$ logical AND). In panel (c) we illustrate \emph{strong fair sampling}: the component acting on the quantum input is proportional to the identity channel.}
\label{Fig1}
\end{figure}

Following the work of Berry et.\ al.~\cite{Berry10}, we then give a very general formulation of fair sampling: to have fair sampling, it is sufficient to require that the filter operates independently on the classical input and on the quantum state, see panel (b) of Figure~\ref{Fig1}. Under this assumption the sampling can be regarded as ``fair'', since the post-selected data can be seen as being generated from an ideal quantum experiment where lossless devices measure a \emph{filtered} state\footnote{For instance, in quantum optics experiments the vacuum component of photonic states is often ignored. This corresponds to applying a filtering operation that projects the quantum state onto the non-vacuum component.}. This allows to reach trusted conclusions on the quantum properties of this filtered state and the corresponding measurement statistics can be used reliably for verifiable randomness generation~\cite{CK11, AM16} or for quantum key distribution~\cite{PABG07, PABG09}.

In this work, we also discuss a stronger version of the fair sampling assumption, corresponding to what is usually assumed to be the very definition of fair sampling (see e.g.~\cite{WJS98, MMB04}). In strong fair sampling, filter operating on the state is proportional to the identity, see panel (c) of Figure~\ref{Fig1}. We show that, in this case, the post-selected statistics is an unbiased representation of the statistics that would be obtained with unit-efficiency detectors measuring the actual \emph{experimental} state.

The motivation of the present work is thus to bridge the gap between theoretical studies of device-independent certifications and experimental realizations thereof. To this end, we describe concrete examples of experiments that can be used for device-independent quantum information processing and show that, under fair sampling, these experiments allow one to get semi-device-independent conclusions. In fact, even present-day experiments frequently rely upon a fair sampling assumption~\cite{exp1,exp2,exp3,exp4}, and this will likely continue to be necessary in future years; however, often there is very little or no theoretical analysis of the consequences that post-selection has in these experiments. Furthermore, we study what happens when small deviations from exact fair sampling are present and we show that the results are robust against small perturbations. Concretely, we provide a prescription for the minimal information that in an actual quantum experiment should be directly assessed in order to verify that a device (approximately) satisfies fair sampling. Assuming that this minimal information is publicly available and trusted, the certification of the device can be safely completed using a standard device-independent protocol with post-selected data.

\subsubsection*{Paper structure}

In Section~\ref{sec:preliminaries} we describe the general setup and establish part of the notation. In Section~\ref{sec:quantum_model} we give the general mathematical formulation of lossy measurement devices, as represented in panel~(a) of Figure~\ref{Fig1}. In Section~\ref{sec:fair_sampling} we formalize the definition of fair sampling as depicted in panel~(b) of Figure~\ref{Fig1} (equivalent to the definition introduced in reference~\cite{Berry10}); we then prove that, under this assumption, the post-selected data can be reproduced by an ideal quantum experiment involving only lossless detectors. We then introduce the stronger notion of fair sampling in Section~\ref{sec:strong_fs}, corresponding to panel~(c) of Figure~\ref{Fig1}. In Section~\ref{sec:crypto} we discuss the use of fair sampling in cryptographic settings. In Section~\ref{sec:fs_in_quantum_optics} we give concrete examples of fair sampling based on a quantum optics setup used to measure the polarization of photons. In Section~\ref{sec:approximate_fair sampling} we investigate the robustness of the results against small perturbations from exact fair sampling, and we show that the deviations in the post-selected data are linear in the perturbations from the exact case. We go back to the optical setup in Section~\ref{sec:appr_fs_in_quantum_optics} and analyse the consequence of approximate fair sampling in this context. Finally, in Section~\ref{sec:state-dependent} we show that even more general notions of fair sampling are possible, considering cases of fair sampling that are state-dependent. Supplementary material and further considerations are presented in the Appendices.

\section{Preliminary notions}
\label{sec:preliminaries}

\paragraph*{Bell tests:}
We consider a Bell test~\cite{Brunner14} involving two or more parties, which we will denote as $A, B, \ldots, N,$ each party having a measurement device with different possible measurement settings. A source produces particles which are distributed to the individual parties. As customary in this context, it is assumed that each party is in an isolated location, i.e., posterior to each party receiving its particle, no further information can be exchanged among them, at least not without their agreement.

\paragraph*{Quantum framework:}
We assume that quantum theory provides a valid description of the source and measurements used in the Bell test. To each party $k \in \{A, B, \ldots, N\}$ is associated a local Hilbert space $\hil_k$ and the global Hilbert space is $\hil := \hil_A \tot \hil_B \tot \ldots \tot \hil_N$. The particles are described by a quantum state $\Psi$ which is in general a density operator (i.e., a mixed quantum state) $\Psi \in S(\hil)$, where $S(\hil)$ denotes the class of positive semi-definite unit-trace operators on $\hil$. The local spaces are denoted as $S(\hil_{k})$, and we usually call $\rho$ a state that is received and measured by a local detector, $\rho \in S(\hil_{k})$. The measurement devices are described in terms of general positive operator valued measures (POVMs) and we assume that outcomes in a sequence of measurement rounds are independent and identically distributed (i.i.d.). When one of the parties has only partial information about a measurement device, she will possess a more coarse-grained POVM description. We avoid the use of coarse-grained descriptions, as these open the side to attacks, see Section~\ref{sec:crypto}.

\paragraph*{Inputs and outputs:}
Each party has full control over the setting $\x \in \X$ which can be chosen, e.g., by turning a knob or via digital control. For each measurement, there is a set of possible outcomes $\a \in \A$. In the following, we will be mainly interested in the case where both the set of inputs and of outputs are finite\footnote{This is customarily the setting used in Bell tests, although cases with continuous outputs have also been considered~\cite{He10}. Note that the results we present do not depend on the size of the classical input and output spaces; hence, they can be generalized to continuous-variable cases.}. For instance, in the standard Clauser-Horne-Shimony-Holt (CHSH) inequality~\cite{CHSH69,CH74} the set of inputs is $\X = \{0,1\}$ and the possible outcomes are $\A = \{+1,-1\}$, for party A's device and similarly for party B's device. We suppose that along the set of ``good'' outcomes, labelled by an index $\a \in \A$, there is a special label $\nc$ for the ``no-click'' event (i.e., a failed detection in correspondence of a trigger signal) so that the entire alphabet of outcomes becomes $\A' := \{\nc\} \cup \A$.

\paragraph*{Fair sampling for multiple parties:}
In the following presentation we will often restrict our attention to a single measurement device and to a local Hilbert space (so we can take, e.g., $\hil \equiv \hil_{A}$). In fact, fair sampling is an assumption on a single apparatus, i.e., can be considered outside of the context of Bell inequalities. For a Bell test involving $N$ measurement devices, we can say that the setup satisfies fair sampling when each of the $N$ devices, considered individually, satisfies the assumption. However, one obtains a completely equivalent mathematical formulation by considering the $N$ devices as a single collective measurement device split across $N$ locations, and then requiring that this global multipartite device satisfies fair sampling. Hence, all the considerations made for a single device will hold also for the global setup, provided that the fair sampling assumption holds for each device.

\section{Quantum model of a lossy detector}
\label{sec:quantum_model}

We now provide a description of a \emph{measurement device} having non-unit efficiency. We show in particular that such a measurement can be seen as the combination of a filter and an ideal unit-efficiency measurement. We then link this description to the efficiency of the device.

\subsection{POVM elements}
\label{sec:POVM_elements}

We describe the measurement device $\cM$ as a set of POVM elements $\{M_\a := \sum_{\x \in \X} \proj{\x} \tot M^\x_\a\}$ ``taking'' both $\x$ and $\rho$ and returning the outcome $\a$ with a certain probability ${\Pr}_\cM(\a\,|\, \x , \rho) $:
\begin{align}
	\cM: \ S(\;\hil_\X \; \otimes \hil\,) 
	\; & \to \;
	\cD(\A') \nn\\
	\proj{\x} \otimes \, \rho ~~~
	& \mapsto \; 
	\big[\, {\Pr}_\cM(\a\,|\, \x , \rho) = \Tr(M_\a \, \proj{\x} \tot \rho) \,\big]_{\a\in\A'}
	\;.
\end{align}
$\hil_\X$ is a Hilbert space with an orthonormal basis $\{\ket{\x}\}_{\x\in\X}$ and $\cD(\A')$ is the set of probability distributions on $\A'.$ For each $\a \in \A'$, the channel $\cM_\a^\x(\,\cdot\,) := \Tr(M^\x_\a \;\cdot\,)$ is a real-valued completely-positive (CP) map~\cite{NC02} satisfying $\sum_{\a \in \A'} \Tr(M_\a^\x \rho) = 1$ for all density operators $\rho.$
We introduce two flags $\f = \cm$ and $\f = \nc$ associated to two complementary operators $M_\cm^\x$ and $M_\nc^\x$ which satisfy
\begin{align}
	\forall \x \in \X \qquad 
	& M_\cm^\x \; := \; \sum_{\a\in\A} M_\a^\x \;, \\
	\text{so~that} \qquad
	& M_\cm^\x + M_\nc^\x 
	\; = \; \sum_{\a\in\A'} M_\a^\x
	= \; \Id \;.
\end{align}
As we will show later, a description of the operators $M_\cm^\x$ (or equivalently of $M_\nc^\x$) is all that is required to know if a device satisfies fair sampling, while he full POVM description $M_\a^\x$ is not necessary to this end.

\subsection{Filters}
\label{sec:filters}

We can now introduce the notion of \textit{filtering}: for any lossy device $\cM$, one can always find a corresponding pair of completely-positive trace-preserving (CPTP) maps 
\begin{align}
\label{eq:M_decomposition}
	\cF:& ~~~~ S(\,\hil_\X \otimes \hil\,) ~~~~
	\; \to \;
	 S(\,\hil_\f \otimes \hil_\X \otimes \hil \,) \nn\\
	\ocM:& \; S(\,\hil_\f \otimes \hil_\X \otimes \hil\,)
	\; \to \;
	\cD(\A')
\end{align}
such that $\cM = \ocM \circ \cF$. Here, $\hil_\f = \t{span}(\ket{\nc},\ket{\cm})$, $\cF$ represents the filter operation, and $\ocM$ a lossless measurement device. Notice that for mathematical consistency of the equation $\cM = \ocM \circ \cF$ the set of outputs of $\ocM$ must be equal to the set of outputs of $\cM$ and thus also include a ``no-click'' outcome. We therefore consider $\ocM$ to be lossless if a flag $\nc$ has the only effect of ``disabling'' $\ocM$, i.e., $\ocM$ outputs $\nc$ if and only if the filter $\cF$ has already produced a $\nc$ flag.

We remark that the factorization $\cM = \ocM \circ \cF$ is not unique, since we have freedom in redefining $\cF$ and $\ocM$. For instance, we can consider any invertible linear operator $\mathcal{A}$ such that $\ocM' := \ocM \circ \mathcal{A}^{-1}$ and $\cF' := \mathcal{A} \circ \cF$ are both CPTP maps and obtain $\cM = \ocM \circ \cF = \ocM' \circ \cF'$. Observe that complete-positivity is surely preserved if $\mathcal{A}$ is a unitary channel, thus the decomposition $\cM = \ocM \circ \cF$ is certainly not unique.

Given a measurement device $\cM$ with POVM elements $\{M_\a\},$ one can construct a decomposition $\cM = \ocM \circ \cF$ as follows\footnote{We have in this case $M_\a = \sum_{\x \in \X} \proj{\x} \otimes M^\x_\a$, but the construction of the decomposition $\cM = \ocM \circ \cF$ does not rely on this property.}. We have $M_\cm := \sum_{\a\in\A} M_\a$, so that $M_\cm + M_\nc = \Id$. Then, we define the filter $\cF$ as a CPTP map having Kraus operators~\cite{NC02}:
\begin{align}
\begin{split}
	F_\nc \, & := \;
	\ket{\nc} \otimes \sqrt{M_\nc} \,, \\
	F_\cm \, & := \;
	\ket{\cm} \otimes \sqrt{M_\cm} \,, \label{eq:Kraus_ok}
\end{split}
\end{align}
which satisfy the completeness relation $F_\cm^\dag F_\cm + F_\nc^\dag F_{\o} = \Id$. Correspondingly, $\ocM$ is defined as the measurement device acting on the output of $\cF$ (including the flag) and having POVM elements:
\begin{align}
	\label{eq:lossless_M}
	\forall \a \in \A \qquad
	\oM_\a \; & = \;
	\proj{\cm} \otimes (M_\cm)^{-1/2} \, M_\a \, (M_\cm)^{-1/2} 
\end{align}
where the inverse square roots are defined on the support of the operators\footnote{I.e., an Hermitian operator $M = \sum_j p_j \ket{\psi_j}\!\bra{\psi_j}$, with $p_j \geq 0$, is mapped to $(M)^{-1/2} = \sum_j f(p_j) \ket{\psi_j}\!\bra{\psi_j}$, with $f(p) = 1/\sqrt{p}$ if $p>0$ and $f(0) = 0$.}. One can immediately verify that $\sum_{\a\in\A} \oM_\a = \proj{\cm} \otimes \Pi_\cm$, where $\Pi_\cm$ is the projector on the support of $M_\cm$. This means that $\ocM$ has unit efficiency when it receives $\f = \cm$ and a state $\rho$ in the support of $M_\cm$.

From these definition, we can immediately show that $\cM = \ocM \circ \cF$ ensues. In fact, using the shorthand $\xi := \proj{\x} \otimes \rho$, we have:
\begin{align}
	\cF(\xi) \; & = \;
	F_\cm \,\xi\, F_\cm^\dag +
	F_\nc \,\xi\, F_\nc^\dag \nn\\
	& = \;
	\proj{\cm} \otimes \sqrt{M_\cm} \,\xi\, \sqrt{M_\cm} \; + \;
	\proj{\nc} \otimes \sqrt{M_\nc} \,\xi\, \sqrt{M_\nc} \;,
\end{align}
and thus, using the cyclicity of the trace:
\begin{align}
	{\Pr}_{\ocM \circ \cF} (\a\,|\,\xi) 
	\; & = \;
	\Tr\!\left[(M_\cm)^{-1/2} M_\a (M_\cm)^{-1/2} \, \sqrt{M_\cm} \,\xi \,\sqrt{M_\cm} \right] 
	\nn\\
	& = \;
	\Tr\big(\Pi_\cm M_\a \,\Pi_\cm \, \xi \big)
	\; = \;
	\Tr\big(M_\a \,\xi \big) 
	\nn\\[1mm]
	& = \;
	{\Pr}_{\cM} (\a\,|\,\xi) 
\end{align}
for all $\a\in\A$. Consequently, we also have ${\Pr}_{\ocM \circ \cF} (\nc\,|\,\xi) = {\Pr}_{\cM} (\nc\,|\,\xi)$.

\subsection{Efficiency}

The operator $M_\cm^\x$ allows to compute the \textit{efficiency} $\cE$ of the device. The efficiency is, per definition, the probability of obtaining a good outcome ($\f = \cm$) when using a setting $\x$ and an input quantum state $\rho$:
\begin{align}
\label{eq:efficiency_def}
	\cE(\x, \rho) \; := \; {\Pr}_\cM(\checkmark\,|\, \x, \rho) 
	\; = \;
	\Tr(M_\cm^\x \, \rho )
	\; = \;
	\Tr\big[F_\cm \, (\,\proj{\x} \otimes \rho \,) \, F_\cm^\dag\big] \;.
\end{align}
The last equality shows that the efficiency $\cE(\x, \rho)$ can be computed from just the specification of the filter, rather than requiring the full POVM description of $\cM$. Moreover, the efficiency of the detector is a physical property which can be assessed experimentally, provided access to a reliable source of quantum states $\rho \in S(\hil)$. Hence, the efficiency does not depend upon the specific decomposition $\cM = \ocM \circ \cF = \ocM' \circ \cF'$.

The efficiency of the detector allows one to compute the post-selected outcome probabilities
\begin{align}
\label{eq:post-selected-prob}
	{\Pr}_\textup{p.s.} (\a \, | \, \x, \rho)\; := \;
	\frac{{\Pr} (\a \,|\, \x, \rho)}{{\Pr} (\cm | \,\x, \rho)}	\;,
\end{align}
where $\a \in \A$ are ``good'' outcomes, so that $\sum_{\a\in\A} {\Pr}_\textup{p.s.} (\a | \x, \rho) = 1$.

\section{The fair sampling assumption}
\label{sec:fair_sampling}

The fair sampling assumption is a restriction on the physical models of lossy detectors. We introduce a definition which is equivalent to the one introduced by Berry et.\ al.~\cite{Berry10}. We then discuss the consequences and the applications of this definition.

\begin{definition}[Fair sampling]
\label{def:fair_sampling}
We say that a lossy measurement device $\cM$ satisfies the (weak) \emph{fair sampling assumption} if there exists a decomposition $\cM = \ocM \circ \cF$ (as specified in Section~\ref{sec:filters}) whereby the filter $\cF$ factorizes in a part $\cF_C: S(\hil_\X) \to S(\hil_\f \otimes \hil_\X)$ acting on the \textit{classical} setting $\x$ and in a part $\cF_Q:S(\hil) \to S(\hil_\f \otimes \hil)$ acting on the \textit{quantum} input $\rho$. That is, we require:
\begin{align}
	\cF\big( \proj{\x} \otimes \rho \big) \; = \;
	\wedge \left[\;
	\cF_C\big( \proj{\x} \big) \otimes \cF_Q \big(\rho \big) 
	\;\right]\;,
	\label{eq:def_filter}
\end{align}
where the function $\wedge$ (logical AND) acts only on the flags and it means that the filter $\cF$ returns $\cm$ if and only if both $\cF_C$ and $\cF_Q$ return $\cm$.
\end{definition}

A few remarks are now in order. First and foremost, the main reason for using this definition stems from Proposition~\ref{prop:fair_sampling}. There, we show the following strong result: if the factorization of Eq.~\eqref{eq:def_filter} holds, for a given (real) quantum experiment involving lossy detectors and post-selection, there is another (ideal) quantum experiment that involves lossless detectors (and possibly a different quantum state) that exactly reproduces the post-selected statistics of the real experiment. In other words, the post-selected statistics are physical.

Second, in Definition~\ref{def:fair_sampling} we have a decomposition of a filter in two sub-filters. In more general scenarios involving $N$ devices and $2N$ sub-filters, we assume that a successful round of the experiment is obtained only when all filters return $\f = \cm$. In fact, in most quantum experiments a single detection failure is sufficient to invalidate the current observation round\footnote{An exception is provided by experiments on loss-tolerant quantum error correcting codes, which are specifically designed to retain quantum information in a subspace even when some of the physical quantum systems are lost.}.

Third, although it is not possible to verify if a measurement device satisfies fair sampling in a device-independent way, fair sampling does impose restrictions on the outcome probabilities. If these conditions are violated, the experimenter can directly conclude that fair sampling does not hold. See Appendix~\!\ref{app:necessary} for details.

Last, the classical filter is a stochastic map which, given an input $\proj{\x}$ can generate a probabilistic mixture of settings, i.e., the ``real'' setting is $\sum_{\x'} \Pr(\x'|\x)\proj{\x'}$. As a physical example, the experimenter may electronically set a rotation angle $\theta_0$ of a polariser, but the polariser actually rotates by an angle $\theta = \theta_0 + \delta \theta$ (with $\delta \theta$ small and stochastic)~\cite{Rosset12}. However, without loss of generality, we may assume that $\cF_C$ does not change the setting $\x$, i.e.\ it only assigns different success probabilities to different settings. In fact, we can simultaneously re-define $\ocM$ and $\cF_C$ so that the definition of the physical device $\cM$ remains unchanged while the classical filter takes the form $\cF_{C,\cm}(\proj{\x}) = \Pr(\cm |\, \x) \proj{\x}$, as we show in Appendix~\!\ref{app:filter}.

\begin{proposition}[Equivalent formulations of fair sampling]
\label{prop:equivalence}
Consider a lossy measurement device $\cM$ (as formalized in Section~\ref{sec:quantum_model}). The following three properties are equivalent:
\begin{enumerate}
\item $\cM$ satisfies fair sampling, as given in Definition~\ref{def:fair_sampling}.

\item The efficiency of the device factorizes as:
\begin{align}
	\cE( \x, \rho ) \; = \; \cE_C(\x) \, \cE_Q(\rho)
	\label{eq:def_efficiency}
\end{align}
for some real-valued functions $\cE_C$ and $\cE_Q$. This is the definition that was given in Ref.~\cite{Berry10}.

\item The POVM element $M_\cm$, associated to the set of good outcomes of $\cM$, factorizes as in 
\begin{align}
	M_\cm \; =\; M_{C,\cm} \otimes M_{Q,\cm}
	\label{eq:def_POVM}
\end{align} 
where $M_{C,\cm}$ acts on the classical setting and $M_{Q,\cm}$ acts on the quantum input. Moreover,  $M_{C,\cm} = \sum_{\x} \cE_C(\x) \proj{\x}$ is a diagonal matrix, equivalently, $M_\cm^\x = \cE_C(\x) M_{Q,\cm}$ and therefore $M_\cm^\x$ and $M_\cm^{\y}$ are proportional for all $\x, \y \in \X$.
\end{enumerate}
\end{proposition}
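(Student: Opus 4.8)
The plan is to prove the cyclic chain of implications $1\Rightarrow 2\Rightarrow 3\Rightarrow 1$. For $1\Rightarrow 2$, start from a fair-sampling decomposition $\cM=\ocM\circ\cF$ with $\cF(\proj{\x}\otimes\rho)=\wedge[\cF_C(\proj{\x})\otimes\cF_Q(\rho)]$. Since $\ocM$ is lossless, the efficiency $\cE(\x,\rho)={\Pr}_\cM(\cm\,|\,\x,\rho)$ equals the probability that the filter $\cF$ emits the flag $\cm$, which by the $\wedge$-rule is the probability that $\cF_C$ \emph{and} $\cF_Q$ both emit $\cm$. Because the input $\proj{\x}\otimes\rho$ is a product across $\hil_\X$ and $\hil$ and the two sub-filters act on disjoint tensor factors, the flag of $\cF_C$ and the flag of $\cF_Q$ are independent; their joint success probability therefore factorizes over the product state, yielding Eq.~\eqref{eq:def_efficiency} with $\cE_C(\x):=\Tr[(\proj{\cm}\otimes\Id)\,\cF_C(\proj{\x})]$ and $\cE_Q(\rho):=\Tr[(\proj{\cm}\otimes\Id)\,\cF_Q(\rho)]$.

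For $2\Rightarrow 3$, recall that $\cE(\x,\rho)=\Tr(M_\cm^\x\rho)$ is linear in $\rho$. Discarding the trivial case $\cE_C\equiv 0$, fix $\y$ with $\cE_C(\y)\neq 0$: then $\cE_Q(\rho)=\Tr(M_\cm^\y\rho)/\cE_C(\y)$ is linear in $\rho$, hence $\cE_Q(\rho)=\Tr(M_{Q,\cm}\rho)$ with $M_{Q,\cm}:=M_\cm^\y/\cE_C(\y)$ positive semidefinite (a common sign is absorbed into the pair $\cE_C,\cE_Q$). Now $\Tr(M_\cm^\x\rho)=\cE_C(\x)\Tr(M_{Q,\cm}\rho)$ for all states $\rho$, and since $M\mapsto\Tr(M\,\cdot\,)$ is injective on Hermitian operators this forces $M_\cm^\x=\cE_C(\x)M_{Q,\cm}$ — in particular all $M_\cm^\x$ are mutually proportional. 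Summing over $\x$, $M_\cm=\sum_\x\proj{\x}\otimes M_\cm^\x=\big(\sum_\x\cE_C(\x)\proj{\x}\big)\otimes M_{Q,\cm}=:M_{C,\cm}\otimes M_{Q,\cm}$, with $M_{C,\cm}$ diagonal in $\{\ket{\x}\}$, as claimed.

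For $3\Rightarrow 1$ I would reconstruct the factorized filter explicitly. First rescale the two tensor factors by a common scalar so that $M_{C,\cm}\le\Id$ and $M_{Q,\cm}\le\Id$ hold simultaneously, which is possible because $M_{C,\cm}\otimes M_{Q,\cm}=M_\cm\le\Id$ bounds the product of their largest eigenvalues by one. Define $\cF_C$ from Kraus operators $\ket{\cm}\otimes\sqrt{M_{C,\cm}}$, $\ket{\nc}\otimes\sqrt{\Id-M_{C,\cm}}$ and $\cF_Q$ from $\ket{\cm}\otimes\sqrt{M_{Q,\cm}}$, $\ket{\nc}\otimes\sqrt{\Id-M_{Q,\cm}}$; both are CPTP by the completeness relation, and $\cF:=\wedge[\cF_C\otimes\cF_Q]$ is CPTP. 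On the accepted branch $\sqrt{M_{C,\cm}}\otimes\sqrt{M_{Q,\cm}}=\sqrt{M_\cm}$, so $\cF$ reproduces the canonical filter of Section~\ref{sec:filters} restricted to the flag $\cm$. It remains to supply a lossless $\ocM$ with $\cM=\ocM\circ\cF$: take its good-outcome POVM elements to be $\oM_\a:=\proj{\cm}\otimes\sum_{\x:\,\cE_C(\x)>0}\cE_C(\x)^{-1}\,\proj{\x}\otimes M_{Q,\cm}^{-1/2}M_\a^\x M_{Q,\cm}^{-1/2}$ (inverses on the support; well-defined since $M_\a^\x\le M_\cm^\x=\cE_C(\x)M_{Q,\cm}$ forces $\mathrm{supp}\,M_\a^\x\subseteq\mathrm{supp}\,M_{Q,\cm}$), with the $\nc$ outcome inherited from the filter. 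A short computation exactly as in Section~\ref{sec:filters} gives $\Tr(\oM_\a\,\cF(\proj{\x}\otimes\rho))=\Tr(M_\a^\x\rho)={\Pr}_\cM(\a\,|\,\x,\rho)$, while $\sum_{\a\in\A}\oM_\a=\proj{\cm}\otimes\sum_{\x:\,\cE_C(\x)>0}\proj{\x}\otimes\Pi$, with $\Pi$ the projector onto $\mathrm{supp}\,M_{Q,\cm}$, which acts as the identity on every accepted state that $\cF$ can output; hence $\ocM$ is lossless in exactly the sense of Section~\ref{sec:filters}, and $\cF=\wedge[\cF_C\otimes\cF_Q]$ is the factorization of Definition~\ref{def:fair_sampling}.

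The main obstacle is the constructive step $3\Rightarrow 1$: one must (i) rescale the tensor factors so that each \emph{separately} defines a legitimate CPTP sub-filter, not merely their product, and (ii) verify that the reconstructed $\ocM$ is a bona fide \emph{lossless} measurement — i.e.\ that its good-outcome elements complete to the identity on the (possibly proper) subspace spanned by the accepted states emerging from $\cF$, so that $\ocM$ never emits $\nc$ except when the filter already did. The remaining points — the independence argument in $1\Rightarrow 2$ and the degenerate cases ($\cE_C\equiv 0$, or some $\cE_C(\x)=0$) in $2\Rightarrow 3$ — require only routine care.
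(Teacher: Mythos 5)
Your proposal is correct and follows essentially the same route as the paper's proof: the same cyclic chain $1\Rightarrow 2\Rightarrow 3\Rightarrow 1$, with the efficiency factorizing over a product input in the first step, linearity plus the trace representation of CP maps in the second, and the Kraus-operator reconstruction of $\cF_C$, $\cF_Q$ and the lossless $\ocM$ (as in Section~\ref{sec:filters}) in the third. The only difference is one of detail: you spell out the rescaling of the two tensor factors, the choice $M_{Q,\cm}=M_\cm^{\y}/\cE_C(\y)$, and the verification that the reconstructed $\ocM$ is genuinely lossless, all of which the paper leaves implicit by deferring to Section~\ref{sec:filters}.
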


\begin{proof}

Eq.~\eqref{eq:def_filter} immediately implies Eq.~\eqref{eq:def_efficiency}, since:
\begin{align}
	\cE(\x,\rho) 
	\; & = \;
	{\Pr}_{\ocM \circ \cF} (\cm | \x,\rho) \; 
	= \;
	{\Pr}_{\cF} (\cm | \x,\rho) \nn\\
	& = \;
	{\Pr}_{\cF_C} (\cm | \x) \, {\Pr}_{\cF_Q} (\cm | \rho) \nn\\
	& \equiv \;
	\cE_C(\x) \, \cE_Q(\rho) \;.
\end{align}

To show that Eq.~\eqref{eq:def_efficiency} implies Eq.~\eqref{eq:def_POVM}, we use the fact that the efficiency $\cE: S(\hil_\X \otimes \hil ) \rightarrow [0,1]$ is a CP linear map, hence $\cE_C$ and $\cE_Q$ are also linear maps (over probabilistic mixtures of inputs) and we can assume, without loss of generality, that both maps take values in $[0,1]$. Since any CP linear map $\cE$ taking value in $[0,1]$ can be written as $\cE(\,\cdot\,) = \Tr(M \,\cdot\,)$, for some matrix satisfying $0 \preccurlyeq M \preccurlyeq \Id$, we have:
\begin{align}
	\cE(\x,\rho) \; & = \; 
	\cE_C(\x) \, \cE_Q(\rho) \nn\\
	& \equiv \;
	\Tr(M_{C,\cm}\proj{\x} ) \, \Tr(M_{Q,\cm}\rho) \nn\\
	& = \;
	\Tr\big[\, (M_{C,\cm} \otimes M_{Q,\cm}) \, (\proj{\x} \otimes \rho ) \, \big] \;,
\end{align}
for some matrices $M_{C,\cm}$ and $M_{Q,\cm}$. Hence, defining $M_\cm := M_{C,\cm} \otimes M_{Q,\cm}$, Eq.~\eqref{eq:def_POVM} holds. Moreover, $M_{C,\cm} = \sum_{\x} \cE_C(\x) \proj{\x}$ is diagonal.

Finally, to show that Eq.~\eqref{eq:def_POVM} implies Eq.~\eqref{eq:def_filter}, we construct filters $\cF_C$ and $\cF_Q$ as in Eq.~\eqref{eq:Kraus_ok}, that is, via the Kraus operators:
\begin{align}
\begin{split}
	F_{C,\cm} \, & := \;
	\ket{\cm} \otimes \sqrt{M_{C,\cm}} \,, \\
	F_{Q,\cm} \, & := \;
	\ket{\nc} \otimes \sqrt{M_{Q,\cm}} \,.
\end{split}
\end{align}
We then set $\cF( \proj{\x} \otimes \rho) :=  \wedge \left[\;\cF_C( \proj{\x} ) \otimes \cF_Q (\rho)\;\right]$, and defining the POVM elements of a lossless device $\ocM$ as in Section~\ref{sec:filters} allows one to show that $\cM = \ocM \circ \cF$ holds. 

\end{proof}

We now provide a result that motivates using the factorization in Eq.~\eqref{eq:def_filter} as definition of fair sampling. The formalism of filters turns out to be very handy: the essence of the proof Proposition~\ref{prop:fair_sampling} is illustrated in Figure~\ref{Fig2}.

\begin{figure}[t]
\begin{center}
	\includegraphics[scale=1.3]{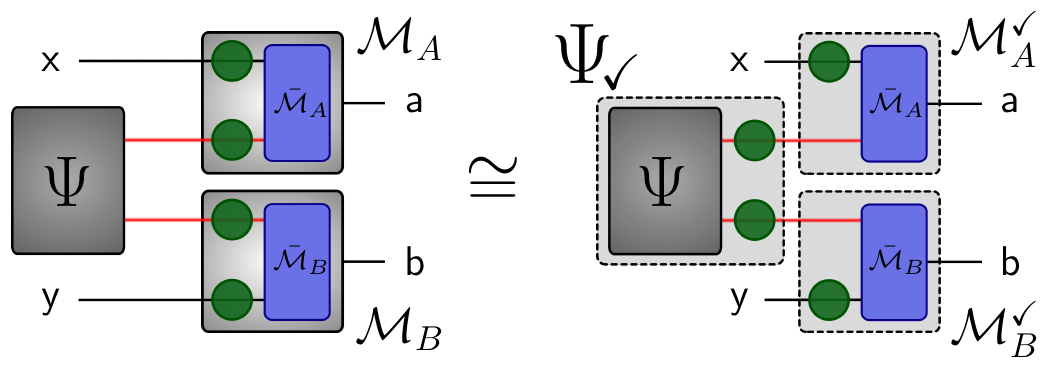}
	\vspace{-3mm}
\end{center}
\caption{Example of a two-party experiment (e.g., a test of a Bell inequality). The picture schematically illustrates how the factorization property of Definition~\ref{def:fair_sampling} leads to fair sampling. }
\label{Fig2}
\end{figure}

\begin{proposition} [Probabilities achievable under fair sampling]
\label{prop:fair_sampling}
Consider any (real) experiment involving a source that produces an $N$-partite quantum state $\Psi \in S(\hil_{1} \otimes \hil_{2} \otimes\ldots \otimes \hil_{N})$, with $N \geq 1$. The state $\Psi$ is measured by a collection of $N$ lossy detectors, $\overrightarrow{\cM} := (\cM_1, \cM_2,\ldots, \cM_N)$, each acting on a local sub-system of $\Psi$ and each satisfying fair sampling as in Definition~\ref{def:fair_sampling}. Call $\vec{\x} := (\x_1,\x_2,\ldots, \x_N)$ the collection of $N$ settings of the devices in a given experimental run, and call $\vec{\a} := (\a_1,\a_2,\ldots, \a_N)$ the collection of $N$ measurement outcomes\footnote{Each party $k$ may have a different set of inputs and outputs, $\x_k \in \X_k$ and $\a_k \in \A_k$.}, each of which could also be a ``no-click'' event $\nc$. A round of the experiment is successful when all the detectors click, and the post-selected statistics is obtained by restricting only to the successful runs.

Then, there is an (ideal) experiment involving a source of a quantum state $\Psi_\cm$ (defined on the same Hilbert space as $\Psi$) and $N$ lossless measurement devices $\oocM := (\cM_1^\cm, \cM_2^\cm,\ldots, \cM_N^\cm)$ whose outcome statistics is equal to the post-selected statistics of the real experiment. That is, we have:
\begin{align}
\label{eq:post-selection_equality}
	{\Pr}^\textup{p.s.}_{\overrightarrow{\cM}}\big(\vec{\a}\,|\,\vec{\x}, \Psi\big)
	\;\equiv\;
	\frac{{\Pr}_{\overrightarrow{\cM}}(\vec{\a}\,|\,\vec{\x}, \Psi)}
	     {{\Pr}_{\overrightarrow{\cM}}(  \cm     |\,\vec{\x}, \Psi)}
	\; = \;
	{\Pr}_{\oocM}\big(\vec{\a}\,|\,\vec{\x}, \Psi_\cm\big) \;,
\end{align}
restricting to the settings $\vec{\x}$ such that ${\Pr}_{\overrightarrow{\cM}}(\cm|\,\vec{\x}, \Psi) \neq 0$\footnote{A setting $\vec{\y}$ having zero acceptance probability effectively can be erased from the set of allowed settings, since $\vec{\y}$ does not appear in the post-selected data. Then, Eq.~\eqref{eq:post-selection_equality} will hold on the restricted collection of settings.}. Moreover, the state $\Psi_\cm$ can be obtained probabilistically from $\Psi$ via probabilistic local operations.
\end{proposition}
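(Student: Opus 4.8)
The plan is to reduce the multipartite claim to a direct computation on product filters, exploiting the factorization of Definition~\ref{def:fair_sampling} for each device. First I would write each lossy detector as $\cM_k = \ocM_k \circ \cF_k$ with $\cF_k(\proj{\x_k}\otimes\rho_k) = \wedge[\cF_{C,k}(\proj{\x_k})\otimes\cF_{Q,k}(\rho_k)]$, and, invoking the remark after Definition~\ref{def:fair_sampling} (proved in Appendix~\ref{app:filter}), assume without loss of generality that $\cF_{C,k}$ leaves the setting unchanged, i.e.\ its ``$\cm$''-branch Kraus operator is $\ket{\cm}\otimes\sqrt{\cE_{C,k}(\x_k)}\,\Id$ acting trivially on $\hil_{\X_k}$, while its quantum part has ``$\cm$''-branch Kraus operator $\ket{\cm}\otimes\sqrt{M_{Q,k,\cm}}$ on $\hil_k$. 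The global successful-branch Kraus operator on $\hil = \hil_1\otimes\cdots\otimes\hil_N$ is then the tensor product $K := \bigotimes_{k=1}^N \sqrt{M_{Q,k,\cm}}$, multiplied by the scalar $\prod_k \sqrt{\cE_{C,k}(\x_k)}$ coming from the classical parts.

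The key step is then to \emph{define} the filtered state as the (normalized) post-filter state conditioned on global success,
\begin{align}
\Psi_\cm \; := \; \frac{K\,\Psi\,K^\dag}{\Tr(K\,\Psi\,K^\dag)}\,,
\end{align}
which is manifestly obtainable from $\Psi$ by local probabilistic maps (each party applies the Kraus operator $\sqrt{M_{Q,k,\cm}}$ and post-selects on success). Note $\Psi_\cm$ does not depend on $\vec{\x}$ precisely because we arranged $\cF_{C,k}$ to act trivially on the quantum system, so the only $\vec{\x}$-dependence enters through the scalar $\prod_k \cE_{C,k}(\x_k) = \cE_C(\vec{\x})$, which cancels in the post-selection ratio. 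For the lossless devices $\oocM$ I would take the $\ocM_k$ supplied by the construction in Section~\ref{sec:filters}, restricted to their $\cm$-flag action; recall $\sum_{\a_k\in\A_k}\oM_{k,\a_k} = \proj{\cm}\otimes\Pi_{k,\cm}$, so these are genuinely lossless on the relevant support, and $\Psi_\cm$ lies in the joint support $\bigotimes_k\Pi_{k,\cm}$ by construction.

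With these definitions the verification of Eq.~\eqref{eq:post-selection_equality} is a short calculation: using $\cM_k=\ocM_k\circ\cF_k$ and cyclicity of the trace, $\Pr_{\overrightarrow{\cM}}(\vec{\a}\,|\,\vec{\x},\Psi) = \cE_C(\vec{\x})\,\Tr\big[\big(\bigotimes_k \oM_{k,\a_k}\big)\,(K\Psi K^\dag \text{ with flags})\big]$ and likewise $\Pr_{\overrightarrow{\cM}}(\cm\,|\,\vec{\x},\Psi) = \cE_C(\vec{\x})\,\Tr(K\Psi K^\dag)$; dividing, the $\cE_C(\vec{\x})$ factors and the normalization of $\Psi_\cm$ combine to give exactly $\Pr_{\oocM}(\vec{\a}\,|\,\vec{\x},\Psi_\cm)$. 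The main obstacle is bookkeeping rather than conceptual: one must carefully handle the flag registers $\hil_{\f_k}$ so that the global ``$\wedge$'' is correctly implemented (success only when every $\f_k=\cm$), check that the factorized filter really does produce a product Kraus operator on the $\cm$-branch, and confirm that the reduction-to-trivial-$\cF_{C,k}$ step from Appendix~\ref{app:filter} does not spoil any of the other devices' descriptions. A secondary subtlety is the case where $\Tr(K\Psi K^\dag)=0$: this is exactly the excluded situation $\Pr_{\overrightarrow{\cM}}(\cm\,|\,\vec{\x},\Psi)=0$ (for all $\vec{\x}$ with nonzero classical efficiency), so $\Psi_\cm$ is well-defined whenever the post-selected statistics is, and the restriction in the statement handles it.
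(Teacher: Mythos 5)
Your proposal is correct and follows essentially the same route as the paper: decompose each device into a factorized filter followed by a lossless measurement, reduce the classical sub-filters to diagonal form via Appendix~\ref{app:filter}, define $\Psi_\cm$ by applying the local quantum sub-filters and normalizing (your $K\Psi K^\dag/\Tr(K\Psi K^\dag)$ is the single-Kraus-operator instance of the paper's Eq.~\eqref{eq:Psi'}), and verify Eq.~\eqref{eq:post-selection_equality} by the trace computation the paper relegates to Appendix~\ref{app:complementary}. Your handling of the degenerate case $\Tr(K\Psi K^\dag)=0$ and the cancellation of $\cE_C(\vec{\x})$ in the ratio also matches the paper's treatment.
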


\begin{proof}

We decompose each measurement device $\cM_k$, acting on the $k$-th subsystem, as a filter followed by a lossless measurement, $\cM_k = \ocM_k \circ \cF_k$. Moreover, each filter $\cF_k$ factorizes as:
\begin{align}
	\cF_k\big( \proj{\x} \otimes \rho \big) \; = \;
	\wedge \left[\;
	\cF_{k,C}\big( \proj{\x} \big) \otimes \cF_{k,Q} \big(\rho \big) 
	\;\right]\;,
\end{align}
and each sub-filter can herald either a success or a failure:
\begin{align}
\begin{split}
	\cF_{k,C}
	\; & = \; 
	\proj{\cm} \otimes \cF_{k,C,\cm} \, + \;
	\proj{\nc} \otimes \cF_{k,C,\nc} \\
	\cF_{k,Q}
	\; & = \; 
	\proj{\cm} \otimes \cF_{k,Q,\cm} \, + \;
	\proj{\nc} \otimes \cF_{k,Q,\nc} \;.
\end{split}
\end{align}
We then define the normalized quantum state:
\begin{align}
\label{eq:Psi'}
	\Psi_\cm \; := \;
	\frac{1}{\cE_Q(\Psi)} \Big( {\bigotimes}_{k} \, \cF_{k,Q,\cm} \Big) (\Psi ) \;,
\end{align}
where $\cE_Q(\Psi) = \Tr\big[\big( \bigotimes_{k} \cF_{k,Q,\cm} \big) (\Psi)\big]$ is the probability that all the filters $\cF_{k,Q}$ return $\cm$, and we assume $\cE_Q(\Psi) > 0$. The physical interpretation of $\Psi_\cm$ is as follows: apply to the $N$ sub-systems of $\Psi$ the filters $\cF_{k,Q}$ and post-select on all of them returning $\cm$ at the same time. That is to say, it is possible to perform a heralded preparation of $\Psi_\cm$.

Analogously, we can define lossless measurement devices $\cM_k^\cm$ for each party $k$ through the maps $(\cM_k^\cm)^\x_\a:S(\hil_{k}) \to [0,1]$,
\begin{align}
\label{eq:M'}
	\big(\cM_k^\cm\big)^\x_\a \, (\,\cdot\,)
	\; := \;
	\frac{1}{\cE_{k,C} (\x)}
	\Tr\Big\lbrace \big(\bar{M}_k\big)_\a \;
	\big[\, \proj{\cm} \otimes 
	 \cF_{k,C,\cm} (\proj{\x}) \otimes \,\cdot\; 
	\big] 
	\Big\rbrace \;.
\end{align}
Here $(\bar{M}_k)_\a$ are the POVM elements associated to $\ocM_k$ as, e.g., in Eq.~\eqref{eq:lossless_M}, and we assume $\cE_{k,C} (\x) = \Tr[\,\cF_{k,C,\cm} (\proj{\x})\,] > 0$ for all parties\footnote{We can interpret $\cM_k^\cm$ as follows. We repeatedly send some value $\x$ to the filter $\cF_{k,C}$, until this accepts it. In this moment, the measurement device $\cM^\cm_k$ has been ``switched on'' and it is ready to measure an incoming quantum state $\rho$ (without failure): this is a heralded activation of $\cM^\cm_k$.}. We also remark that, as shown in Appendix~\ref{app:filter}, we can assume without loss of generality that the filters do not change the setting, $\cF_{k,C,\cm}(\proj{\x}) = \cE_{k,C}(\x) \proj{\x}$; in this case Eq.~\eqref{eq:M'} simply gives $\cM_k^\cm = \ocM_k$.

As the final step of the proof, we have to compute the probability of obtaining a collection of outcomes $\vec{\a}$ given inputs $\vec{\x}$ when measuring the state $\Psi_\cm$ with the devices $\oocM$, and verify that Eq.~\eqref{eq:post-selection_equality} is satisfied. This is a simple formal manipulation, which is given in Appendix\!~\ref{app:complementary}.

\end{proof}

Proposition~\ref{prop:fair_sampling} is very powerful and general: it ensures that the post-selected statistics obtained using lossy devices is equal to the data one could collect with ideal lossless devices measuring the filtered state $\Psi_\cm$. This has several interesting consequences.

For example, in a scenario where the post-selected data violates a Bell inequality~\cite{CHSH69,CH74}, we can conclude under the fair sampling assumption that the state $\Psi_\cm$ is Bell-correlated \cite{Schmied16}. This implies in particular that $\Psi$ itself contains hidden nonlocality \cite{Gisin96}. Additional information about the state $\Psi_\cm$ can also be obtained from existing device-independent tools: if the Bell violation is high enough, the post-selected data can be used in the framework of self-testing~\cite{MY03, SB19}, to certify directly the exact structure of $\Psi_\cm$ or as a step aiming to certify quantum operations including quantum gates~\cite{SBWS18} or entangling measurements~\cite{RHBS11, BSP18, RKB18}.

We remark, furthermore, that Proposition~\ref{prop:fair_sampling} can be generalized to cases in which some devices are fully characterized, while others are not. In such cases, one needs to appeal to fair sampling only for the not-fully-characterized devices. 
For instance, in \emph{quantum steering} a trusted and an untrusted quantum device jointly measure an entangled quantum state $\Psi$~\cite{CS16}: here, we need to apply fair sampling only to the (single) untrusted device.

Finally, we note that if $\Psi$ is a separable quantum state, then $\Psi_\cm$ is also separable since it can be obtained from $\Psi$ through local probabilistic operations. Consequently, a separable quantum state cannot violate any Bell inequality when using lossy detectors and post-selection, if fair sampling holds. Similarly, under fair sampling it is impossible to violate the quantum bound of a Bell operator with post-selected statistics: in fact, the models and experiments showing that post-selection can lead to incorrect conclusions~\cite{Pearle70, GM87, MP03, PS11, GL11, RG13, JE15, Jogenfors17} do not satisfy fair sampling.

\section{The strong and homogeneous fair sampling assumptions}
\label{sec:strong_fs}

Now we consider two special cases of fair sampling, which we call \emph{strong} and \emph{homogeneous} fair sampling. Incidentally, strong homogeneous fair sampling is what, by and large, is implicitly assumed as the standard definition of fair sampling~\cite{Berry10, WJS98, MMB04}.

\begin{definition}[Strong fair sampling]
\label{def:strong_fair sampling}
Consider a measurement device $\cM$ that satisfies fair sampling, as in Definition~\ref{def:fair_sampling}. The sub-filters $\cF_Q$ and $\cF_C$ herald either a failure or a success, hence they can be written as:
\begin{align}
\begin{split}
	\cF_{C}
	\; & = \; 
	\proj{\cm} \otimes \cF_{C,\cm} \; + \;
	\proj{\nc} \otimes \cF_{C,\nc} \\
	\cF_{Q}
	\; & = \; 
	\proj{\cm} \otimes \cF_{Q,\cm} \; + \;
	\proj{\nc} \otimes \cF_{Q,\nc} \;.
\end{split}
\end{align}

We say that $\cM$ satisfies \emph{strong} fair sampling if there exists a decomposition $\cM = \ocM \circ \cF$ such that the sub-filter $\cF_{Q,\cm}$ is proportional to the identity channel: $\cF_{Q,\cm} = P_{Q,\cm} \textup{Id}$, where $P_{Q,\cm} > 0$ is the success probability. Moreover, if it is possible to write $\cF_{C,\cm} = P_{C,\cm} \textup{Id}$ for some success probability $P_{C,\cm} > 0$, we say that $\cM$ satisfies \emph{homogeneous} fair sampling.
\end{definition}

The notion of \emph{homogeneous} fair sampling is introduced in analogy with the notion of \emph{strong} fair sampling and it means that the detection probability is independent of $\x$. However, not much is to be gained from the homogeneity property. This is because the classical input $\x$ is under full control of the experimenter, therefore it is possible to directly estimate the efficiency $\cE_C(\x)$ from the experimental data (assuming that fair sampling holds) and one can compensate any inhomogeneity in the efficiency. However, in quantum experiments it often happens that devices that (approximately) satisfy fair sampling naturally also possess the homogeneity property. This is the case, e.g., for the \emph{polarization analyser}, a measurement device that we examine in Section~\ref{sec:fs_in_quantum_optics}.

In contrast, \emph{strong} fair sampling does give more stringent guarantees on the experiment. In fact, the state $\Psi_\cm$ which reproduces the post-selected statistics is obtained from the state $\Psi$ after the application of the filters $\cF_{k,Q}$. But if these filters are all proportional to the identity channel, this implies $\Psi_\cm = \Psi$. That is, under the strong fair sampling assumption, the post-selected statistics is a fair representation of the statistics that would be obtained with unit-efficiency detectors.

More in detail, consider again a $N$-local quantum experiment involving $N$ detectors each acting on a part of an entangled quantum state $\Psi$. If all detectors $\overrightarrow{\cM} = (\cM_1, \cM_2,\ldots, \cM_N)$ satisfy strong fair sampling, then we have that the post-selected probabilities are equal to those that one can obtain acting on the actual experimental state $\Psi$ with some lossless devices $\oocM$:
\begin{align}
\label{eq:post-selection_equality_strong}
	{\Pr}^\textup{p.s.}_{\overrightarrow{\cM}}\big(\vec{\a}\,|\,\vec{\x}, \Psi\big)
	\;\equiv\;
	\frac{{\Pr}_{\overrightarrow{\cM}}(\vec{\a}\,|\,\vec{\x}, \Psi)}
	     {{\Pr}_{\overrightarrow{\cM}}(  \cm     |\,\vec{\x}, \Psi)}
	\; = \;
	{\Pr}_{\oocM}\big(\vec{\a}\,|\,\vec{\x}, \Psi\big) \;.
\end{align}
Consequently, if strong fair sampling holds, any claim on $\Psi$ based on the post-selected probabilities will be as good as if the data was produced by lossless devices. That is, the conclusions about non-locality, device-independent entanglement, self-testing, and so on, will hold even if they are based on post-selected data.

However, this strong notion of fair sampling comes at a cost: the conditions specified in Definition~\ref{def:strong_fair sampling} are very stringent and rarely met in actual quantum experiments. We therefore recommend the usage of weak fair sampling as the default definition to be used in experimental settings. The notion of weak fair sampling is more widely applicable and, as argued in Section~\ref{sec:fair_sampling}, it still allows to obtain sharp conclusions based on the post-selected statistics.

Finally, we specialize Proposition~\ref{prop:equivalence} to the case of strong fair sampling.

\begin{proposition}[Equivalent formulations of strong fair sampling]
\label{prop:equivalence2}
Consider a lossy measurement device $\cM$ which satisfies fair sampling, as in Definition~\ref{def:fair_sampling}. Then, $\cM$ satisfies \emph{strong} fair sampling if and only if its efficiency is $\cE(\x,\rho) = \cE_C(\x)$ or, equivalently, if and only if the POVM element associated to $\cm$ has the form $M_\cm = M_{C,\cm} \otimes \Id_Q$.
\end{proposition}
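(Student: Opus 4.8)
The plan is to prove the equivalence of the three statements --- strong fair sampling, the efficiency form $\cE(\x,\rho)=\cE_C(\x)$, and the POVM form $M_\cm=M_{C,\cm}\otimes\Id_Q$ --- by closing the cycle (strong fair sampling) $\Rightarrow$ ($\cE(\x,\rho)=\cE_C(\x)$) $\Leftrightarrow$ ($M_\cm=M_{C,\cm}\otimes\Id_Q$) $\Rightarrow$ (strong fair sampling), relying throughout on Proposition~\ref{prop:equivalence} and on the canonical filter construction of Section~\ref{sec:filters}.

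The equivalence of the two ``characterisations'' is the easy part. From Eq.~\eqref{eq:efficiency_def} we have $\cE(\x,\rho)=\Tr\big[M_\cm\,(\proj{\x}\otimes\rho)\big]=\Tr(M_\cm^\x\rho)$. If $M_\cm=M_{C,\cm}\otimes\Id_Q$ then $\cE(\x,\rho)=\Tr(M_{C,\cm}\proj{\x})\,\Tr(\Id_Q\rho)=\Tr(M_{C,\cm}\proj{\x})=:\cE_C(\x)$, independent of $\rho$. Conversely, if $\cE(\x,\rho)=\cE_C(\x)$ for all $\rho$, then for each fixed $\x$ the linear functional $\rho\mapsto\Tr(M_\cm^\x\rho)$ is constant on the set of density operators; since differences of density operators span the traceless Hermitian operators on $\hil$, this forces $M_\cm^\x=\cE_C(\x)\Id_Q$ exactly, and hence $M_\cm=\sum_\x\cE_C(\x)\proj{\x}\otimes\Id_Q=M_{C,\cm}\otimes\Id_Q$ with $M_{C,\cm}:=\sum_\x\cE_C(\x)\proj{\x}$.

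For ``strong fair sampling $\Rightarrow$ $\cE(\x,\rho)=\cE_C(\x)$'', I would take a decomposition $\cM=\ocM\circ\cF$ with $\cF=\wedge[\cF_C\otimes\cF_Q]$ and $\cF_{Q,\cm}=P_{Q,\cm}\,\textup{Id}$. The same computation as in the proof of Proposition~\ref{prop:equivalence} gives $\cE_Q(\rho)={\Pr}_{\cF_Q}(\cm|\rho)=\Tr[\cF_{Q,\cm}(\rho)]=P_{Q,\cm}\Tr(\rho)=P_{Q,\cm}$, whence $\cE(\x,\rho)=\cE_C(\x)\,\cE_Q(\rho)=P_{Q,\cm}\,\cE_C(\x)$, which no longer depends on $\rho$; absorbing the constant $P_{Q,\cm}>0$ into $\cE_C$ gives the stated form.

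Finally, for ``$M_\cm=M_{C,\cm}\otimes\Id_Q$ $\Rightarrow$ strong fair sampling'' I would exhibit the witnessing decomposition directly. By the second paragraph $M_{C,\cm}=\sum_\x\cE_C(\x)\proj{\x}$ is diagonal with $0\le\cE_C(\x)\le1$. Let $\cF_Q$ be the channel with the single Kraus operator $F_{Q,\cm}:=\ket{\cm}\otimes\Id_Q$, so it always flags $\cm$ and acts trivially, i.e.\ $\cF_{Q,\cm}=\textup{Id}$ with $P_{Q,\cm}=1$; let $\cF_C$ be the CPTP map with Kraus operators $F_{C,\cm}:=\ket{\cm}\otimes\sqrt{M_{C,\cm}}$ and $F_{C,\nc}:=\ket{\nc}\otimes\sqrt{\Id_C-M_{C,\cm}}$; and set $\cF:=\wedge[\cF_C\otimes\cF_Q]$. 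The accept-branch Kraus operator of $\cF$ is then $\ket{\cm}\otimes\sqrt{M_{C,\cm}}\otimes\Id_Q=\ket{\cm}\otimes\sqrt{M_\cm}$, so $\cF$ is exactly the canonical filter of Section~\ref{sec:filters}; defining $\ocM$ through Eq.~\eqref{eq:lossless_M} therefore yields $\cM=\ocM\circ\cF$ by the verification already carried out there, and since $\cF_{Q,\cm}=\textup{Id}$ is proportional to the identity channel, $\cM$ satisfies strong fair sampling, closing the cycle. The only genuinely non-routine point is the observation in the second paragraph that a $\rho$-independent efficiency pins $M_\cm^\x$ down to be \emph{exactly} a multiple of the identity (rather than merely on a subspace); everything else is bookkeeping of the rescaling freedom between $\cE_C$, $M_{C,\cm}$ and $M_{Q,\cm}$.
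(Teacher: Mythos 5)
Your proposal is correct, and it follows the same route as the paper, whose proof consists of the single remark that one should specialize the proof of Proposition~\ref{prop:equivalence} to the case where $\cF_Q$ is proportional to the identity map; your three-step cycle is exactly that specialization written out, with the only substantive addition being the (correct) observation that a $\rho$-independent efficiency forces $M_\cm^\x$ to be exactly a multiple of $\Id_Q$ because differences of density operators span the traceless Hermitian operators.
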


\begin{proof}

It is an immediate consequence of specializing the proof of Proposition~\ref{prop:equivalence} to the cases in which $\cF_Q$ is proportional to the identity map.

\end{proof}

\section{Fair sampling in cryptographic scenarios}
\label{sec:crypto}

Under fair sampling, the post-selected measurement results of a Bell test can also be used for cryptographic tasks, e.g., for random number generation or quantum key distribution~\cite{PABG09, CK11, AM16}. The fair sampling assumption has to be added to the standard \emph{no-leakage} assumption (i.e., no information about the actual measurement is leaked to the adversary) as required also for fully device-independent quantum cryptography \cite{BCK13}. Moreover, device-independent certifications can be used as subroutines in more complex cryptograhic protocols. If a certification is performed employing the fair sampling assumption, then the larger cryptographic protocol will be secure provided that such certification is \emph{securely composable}~\cite{MR09} and, naturally, provided that the fair sampling assumption is valid to begin with.

It is important to note that different parties can have disparate POVM descriptions of the same measurement device, reflecting the degree of knowledge that each party has; in particular, an attacker might have access to some hidden information about the internal functioning of the devices. This also means that parties having different POVM descriptions of one and the same detector might disagree on whether it satisfies fair sampling. We argue that protocols involving lossy devices and post-selection are secure if fair sampling holds unconditionally, and in particular also \emph{according to the POVM description possessed by the adversary}.

\subsection{Explicit attack model}

We consider the so-called \emph{Makarov attack}~\cite{LWW10} as an example where fair sampling appears to hold to the honest parties, but is violated according to the fuller description possessed by the adversary. A detector having settings $\X = \{0,1\}$ and outputs $\A' = \{\nc,+,-\}$ purportedly performs the following lossy measurement:
\begin{align}
\label{eq:traced_out}
\begin{array}{ll}
	M_+^0 \; = \; \frac{1}{4}\proj{0} & \quad
	M_+^1 \; = \; \frac{1}{4}\proj{+} \\
	M_-^0 \; = \; \frac{1}{4}\proj{1} & \quad
	M_-^1 \; = \; \frac{1}{4}\proj{-}
\end{array}	
\end{align}
with $\ket{\pm}:=(\ket{0} \pm \ket{1})/\sqrt{2}$ and thus the no-click events are $M_\nc^0 = M_\nc^1 = \frac{3}{4} \Id$. These measurements are the ones canonically used in CHSH-Bell tests, apart from having efficiency $1/4$, and they seem to satisfy (strong and homogeneous) fair sampling. However, the device is maliciously designed by the adversary: in actuality, it draws uniformly a random variable $r \in \{1,2,3,4\}$, which is unknown to the user, and performs a different measurement depending on $r$. Specifically:
\begin{align}
\begin{array}{r|cccc}
	& r=1 & r=2 & r=3 & r=4 \\
	\hline\\[-9pt]
	M_+^0 &\proj{0}&  0     &  0      &  0       \\[1pt]
	M_-^0 & 0      &\proj{1}&  0      &  0       \\[1pt]
	M_\nc^0  &\proj{1}&\proj{0}&\Id & \Id\\[2pt]
\hline
	M_+^1 & 0      & 0      &\proj{+} &  0       \\[1pt]
	M_-^1 & 0      & 0      &  0      & \proj{-} \\[2pt]
	M_\nc^1  &\Id&\Id&\proj{-} & \proj{+}\\[2pt]
\end{array}
\end{align}
so that tracing out the value of $r$ results in the POVM of Eq.~\eqref{eq:traced_out}. In a cryptographic protocol, the parties must communicate when the detection has succeeded (over a classical authenticated channel). However, this piece of information, together with the knowledge of $r$\footnote{The adversary can know the value of $r$ even under the no-leakage assumption. For instance, a long list of random choices of $r$ could have been stored in a hidden memory inside the device, or they could be obtained algorithmically by a pseudo-random number generator.}, allows the adversary to infer the setting $\x$ and the outcome $\a$ for each successful detection, completely compromising the security. We emphasize that, in this example, from the point of view of the adversary the POVMs describing the detector have the elements 
\begin{align}
\begin{array}{ll}
	M_+^0 \; = \; \proj{0}\otimes\proj{1}_r & \quad
	M_+^1 \; = \; \proj{+} \otimes\proj{3}_r\\
	M_-^0 \; = \; \proj{1} \otimes\proj{2}_r& \quad
	M_-^1 \; = \; \proj{-}\otimes\proj{4}_r,
\end{array}	
\end{align}
which act on the the quantum input of the detector but also on the random variable $r$ which is chosen uniformly at random by the attacker. Manifestly, the POVMs above violate the fair sampling assumption, which corrupts the security guarantees. However, if the variable $r$ is held by the environment (as a physical realization of the loss channel) and can not be accessed by the adversary, her POVM description of the detector is no better than the one of Eq.~\eqref{eq:traced_out}.

Moreover, with two detectors that implement Makarov's attack, it is possible to fake the violation of the CHSH inequality using only separable states, and even saturate the algebraic bound of the operator~\cite{MP03}. That is, calling $\x,\y \in \{0,1\}$ the two inputs and $\a,\b \in \{+,-\}$ the post-selected outputs, these always satisfy $\a\,\b = (-1)^{\x\,\y}$. This can be realized by sending to the two detectors a state $\rho_{r_A,r_B}$ which explicitly depends on the hidden parameters $r_A$ and $r_B$ of the first and second device. A possible choice for $\rho_{r_A,r_B}$ is given in the following table:
\begin{align}
\begin{array}{l|cccc}
	& r_B=1 & r_B=2 & r_B=3 & r_B=4 \\
	\hline\\[-9pt]
	r_A=1 & [0,0]   & \t{vac} & [0,+]   & \t{vac} \\[1pt]
	r_A=2 & \t{vac} & [1,1]   & \t{vac} & [1,-]   \\[1pt]
	r_A=3 & [+,0]   & \t{vac} & \t{vac} & [+,-]   \\[1pt]
	r_A=4 & \t{vac} & [-,1]   & [-,+]   & \t{vac}
\end{array}
\end{align}
where $[p,q]$ is a shorthand for the separable state $\proj{p}\otimes\proj{q}$, while ``vac'' denotes the vacuum state which, naturally, does not trigger the detector; taking into account the detector efficiencies and the vacuum component, the overall detection probability is $1/32 = 1/4 \times 1/4 \times 1/2$. Of course, surpassing the quantum bound of the CHSH operator would reveal that something suspicious is happening. But by adding some white noise in the input states, the expectation value of the CHSH operator can be brought in the region between the classical and quantum limit.

In conclusion, we have to impose that the provided POVM description represents the ``true'' behaviour of the measurement device and that therefore we can be confident that fair sampling holds also according to the adversary's description of the device. In this way, Proposition~\ref{prop:fair_sampling} can be immediately applied to prove security against this adversary, to whom the post-selected statistics are as good as if it were produced by an ideal device. It may seem rather artificial to impose fair sampling in this form, but we argue that there are cryptographic scenarios where this assumption might be plausible. For example, imagine that a trusted company manufactures and sells lossy measurement devices that are guaranteed to work properly, but no information about the internal functioning is provided to the end-users (e.g., as a way to preserve industrial secrets) so that they can only model the device as a black box. A third-party attacker might be able to study a copy of the device, reverse-engineer its functioning, obtain a full POVM description, and discover that fair sampling holds indeed. This would be a setting were Proposition~\ref{prop:fair_sampling} can be put to use and security guarantees recovered, without further assumptions on the structure or calibration of the measurements.

\subsection{Extending to non i.i.d.\ settings}

In this Section we briefly discuss what happens when we do not assume that the measurement statistics are independent and identically distributed (i.i.d.), as done in the rest of the paper. Under no-leakage, the most general attack strategy is obtained when the device internally stores all the quantum states that have been received (and measured) in the past, so that the measurement outcome in a certain round can depend on all the states (and measurement results) at previous rounds. Consequently, the device acts on a Hilbert space whose dimension increases in time and it has a different POVM description at every round. This enables very sophisticated and convoluted attack strategies that are very far from the contexts where fair sampling is a reasonable assumption. Nonetheless, considering fair sampling in non i.i.d.\ settings may be necessary when the measurement devices are not stable in time, e.g.\ as consequence of thermal drift.

Consider now running a Bell experiment where the state preparation and the measurements are repeated for $T$ rounds, in order to collect statistics, and remove the i.i.d.\ assumption, i.e., the devices may behave differently in each of the $T$ rounds (and thus the outcomes are not identically distributed) and there may be memory effect (so that the outcomes are not independent). Are the results of Proposition~\ref{prop:fair_sampling} applicable in these cases?

We remark that the fair sampling assumption can be extended to the non i.i.d.\ case in different ways. To keep the analysis simple, here we propose a redefinition of fair sampling which allows to extend the results of Proposition~\ref{prop:fair_sampling} in a straightforward manner. More general definitions of fair sampling may still be viable, but they would require a more careful analysis.

\begin{definition}[Fair sampling in non i.i.d.\ settings]
\label{def:fair_sampling2}
Consider a lossy measurement device $\cM^{(T)}$ that is used for $T$ rounds in a protocol or experiment. In a general formulation, $\cM^{(T)}$ consists of $T$ devices $\cM_1, \ldots, \cM_T$ where each device $\cM_t$ has a setting $\x_t$, an outcome $\a_t$ and measures a (entangled) state $\rho_t \in S(\hil_t)$; moreover, $\cM_t$ can access an internal memory storing the classical and quantum information it has received or produced during all previous rounds. Each lossy device $\cM_t$ admits a decomposition $\cM_t = \ocM_t \circ \cF_t$, where $\ocM_t$ is a lossless device and $\cF_t$ a filter. Both $\ocM_t$ and $\cF_t$ can act on all the information available at round $t$.

We say that $\cM^{(T)}$ satisfies fair sampling if and only if each filter $\cF_t$ acts only on $\rho_t$ and $\x_t$ (i.e., it acts as the identity channel on the internal memory of the device) and moreover $\cF_t$ factorizes as $\cF_t = \wedge[\cF_{t,C} \otimes \cF_{t,Q}]$, as in Eq.~\eqref{eq:def_filter}.
\end{definition}

Similarly as done in Proposition~\ref{prop:fair_sampling}, we now consider a Bell experiment involving $N$ physically separated non-communicating devices, used for $T$ consecutive rounds, which collectively measure a global quantum state $\Psi$ having $N\times T$ subsystems. A round $t$ of the experiment is successful if and only if all the $N$ devices outputs a $\cm$ flag simultaneously; we then collect the sequence of successes/failures in a list $\vec{f} = \{\f_1, \ldots, \f_T\}$ with $\f_t \in \{\nc,\cm\}$ for each $t$. Notice that the probability that all rounds are successful is exponentially small in $T$ and will essentially never occur when $T$ is large, as is required for having a significant statistical sample. Thus, we suppose that only $T'$ rounds are successful (with $T' \leq T$) and introduce a filtered state $\Psi_{\vec{\f}}$ which is obtained by applying all the $N \times T'$ local filters corresponding to the successful rounds and tracing out the systems involved in unsuccessful rounds.

Following the line of reasoning given in Proposition~\ref{prop:fair_sampling} we can now see that the post-selected measurement statistics of the real experiment is exactly reproduced by the measurement statistics of an ideal experiment where a filtered state $\Psi_{\vec{\f}}$ is measured by lossless devices\footnote{Notice that the filtered state $\Psi_{\vec{\f}}$ now explicitly depends on the sequence of flags $\vec{\f}$ obtained in the experiment. This is not an issue, since in a device-independent protocol the state $\Psi$ is arbitrary to begin with.}. Therefore, in this non i.i.d.\ version of fair sampling we are able to certify in a device-independent way the properties of a quantum state $\Psi_{\vec{\f}}$ that could have been obtained with local probabilistic operations from the state that was present in the lab. Hence, the security proofs that have been developed for certifiable random number generation~\cite{CK11, AM16, PAM10, CZY16, MS16} and device-independent quantum key distribution~\cite{PABG09, LPM13, VV14, ARV19} immediately extend to these devices that satisfy the non i.i.d.\ version of fair sampling.

\section{Fair sampling in quantum optics experiments}
\label{sec:fs_in_quantum_optics}

In this Section we provide a concrete example of a measurement apparatus commonly used in quantum optics, with the intent of showing a possible real-world application of our formalism and also to better understand the conditions that are required to satisfy weak and strong fair sampling, illuminating also how to deal with the case where losses are present prior to the measurement. Specifically, we consider a \emph{polarization analyser}, a device which allows measuring the polarization of incoming photons in arbitrary directions~\cite{Aspect82,SZ99} and the classical setting is the measurement angle $(\x \equiv \theta)$. The layout of the apparatus is schematically depicted in Figure~\ref{Fig4}.

\subsection{Description of a polarization analyser}

We assume that the photons entering the apparatus are described by bosonic operators $b^\dag_H$ and $b^\dag_V$ for the horizontal and vertical polarization, respectively. These operators satisfy canonical commutation relations $[b_H, b_H^\dag] = [b_V, b_V^\dag] = \Id$. A polarizing beam-splitter (PBS) is followed by two non-photon-number-resolving (NPNR) detectors $D_1$ and $D_2$ having POVM elements
\begin{align}
D_1:
\begin{cases}
	E_\nc^\theta = 
	(1-\eta_1)^{b_\theta^\dag b_\theta}~~~~~\,= 
	R_1^{\hat{N}_\theta} \\
	E_\cm^\theta = 
	\Id - (1-\eta_1)^{b_\theta^\dag b_\theta} = 
	\Id - R_1^{\hat{N}_\theta} 
\end{cases}
~
D_2:
\begin{cases}
	E_\nc^{\theta^\bot} = 
	(1-\eta_2)^{b_{\theta^\bot}^\dag b_{\theta^\bot}}~~~~~\,= 
	R_2^{\hat{N}_{\theta^\bot}}\\
	E_\cm^{\theta^\bot} = 
	\Id - (1-\eta_2)^{b_{\theta^\bot}^\dag b_{\theta^\bot}} = 
	\Id - R_2^{\hat{N}_{\theta^\bot}}.
\end{cases}
\end{align}
Here, $\eta_1 = 1 - R_1$ and $\eta_2 = 1 - R_2$ are the efficiencies of detectors $D_1$ and $D_2$. Note also that $\hat{N}_\theta = b_\theta^\dag b_\theta$ and $\hat{N}_{\theta^\bot} = b_{\theta^\bot}^\dag b_{\theta^\bot}$ are the number operators associated to photons having $\theta$ and $\theta^\bot$ polarization, respectively. The angles $\theta, \theta^\bot$ are chosen using a variable polarization rotator (PR) preceding the PBS and are linked to the horizontal/vertical polarization via
\begin{align}
\bigg(\begin{array}{c}
	b_{\theta}^\dag \\
	b_{\theta^\bot}^\dag
\end{array}\bigg)
\; := \;
\bigg(\begin{array}{rr}
	\cos \theta & \sin \theta \\
	- \sin \theta & \cos \theta 
\end{array}\bigg)
\bigg(\begin{array}{c}
	b_H^\dag \\
	b_V^\dag 
\end{array}\bigg) \;.
\end{align}
 If we assume, for the moment being, that the two detectors have equal efficiency $\eta_1 = \eta_2 = \eta = 1-R,$ the four POVM elements of the polarization analyser are:
\begin{align}
\begin{split}
	M_\nc^\theta \, := \;
    E_\nc^{\theta} \tot E_\nc^{\theta^\bot} \; & = \; 
    R^{\hat{N}_\theta} \, R^{\hat{N}_{\theta^\bot}} \ = \ R^{\hat{N}} \\
	M_1^\theta \; := \;
	E_\nc^{\theta} \tot E_\cm^{\theta^\bot} \; & = \; 
    R^{\hat{N}_\theta} \, \big(\Id-R^{\hat{N}_{\theta^\bot}}\big) \\
    	M_2^\theta \; := \;
    	E_\cm^{\theta} \tot E_\nc^{\theta^\bot} \; & = \; 
    \big(\Id-R^{\hat{N}_\theta}\big) \, R^{\hat{N}_{\theta^\bot}}\\
    	M_{1\&2}^\theta \; := \;
    	E_\cm^{\theta} \tot E_\cm^{\theta^\bot} \; & = \; 
    \big(\Id-R^{\hat{N}_\theta}\big) \, \big(\Id-R^{\hat{N}_{\theta^\bot}}\big) \;.
\end{split}
\end{align}
Here, the tensor product represents the bipartition across detectors $D_1$ and $D_2$. In the first line we have used $[\hat{N}_\theta , \hat{N}_{\theta^\bot}] = 0$ and introduced $\hat{N} = \hat{N}_\theta +\hat{N}_{\theta^\bot}.$

\begin{figure}[t]
\begin{center}
	\includegraphics[scale=1.8]{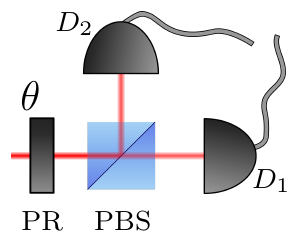}
	\vspace{-3mm}
\end{center}
\caption{A polarization analyser. The polarization of the incoming photons is transformed by a variable polarization rotator (PR); then, the horizontal and vertical components are separated through a polarizing beam-splitter (PBS) in two optical paths; finally, the photons are collected by two non-photon-number-resolving (NPNR) detectors $D_1$ and $D_2$. }
\label{Fig4}
\end{figure}

\subsection{Fair sampling with polarization analysers}

We now show that the polarization analyser previously described (with detectors having equal efficiencies $\eta$ and no dark counts) satisfies the fair sampling assumption. We consider the general case where an arbitrary number of photons can enter the apparatus and, in particular, the input state $\rho$ may have a zero-photon (i.e.\ vacuum) component. Notice that no-click events are caused by two distinct physical processes: either the photons are lost before reaching the apparatus, or some photons reached it, but were not detected. Distinguishing between these two processes is not always possible but also, especially in a device-independent context, not necessary. When dealing with states having a vacuum component, no-clicks would thus occur even while using lossless detectors. Consequently, the ideal experiment reproducing the post-selected data (where no $\nc$ are present) both requires a lossless detector and a filtered state, in which the vacuum component has been discarded.

The relevant Hilbert space is the Fock space consisting of photons that are only distinguishable in their polarization degree of freedom:
\begin{align}
	\hil  \; = \;
	\t{span} \big\lbrace\,
	\big(b_\theta^\dag\big)^{n_{\theta}} \big(b_{\theta^\bot}^\dag\big)^{n_{\theta^\bot}} \ket{\vac}
	\; \big| \;
	n_\theta, n_{\theta^\bot} \in \mathds{N} \,\big\rbrace \;,
\end{align}
where $n_\theta$ and $n_{\theta^\bot}$ are the number of photons polarized in direction $\theta$ and in the orthogonal direction~$\theta^\bot$, respectively. In this Hilbert space, the POVM elements of the polarization analyser are given by\footnote{Here, we have the possibility that both NPNR detectors click at the same time, corresponding to the POVM element $M_{1\&2}^\theta$. If it is required that the device has only two possible measurement outcomes, we can arbitrarily re-assign this outcome to either $M_1^\theta$ or $M_2^\theta$.}
\begin{align}
	M_\nc^\theta \; & = \; 
	R^{\hat{N}} \; = \;
	\proj{\vac} +
	\sum_{n=1}^\infty R^n \; \Pi_n 
	\\
	M_\cm^\theta
	\; & = \;
	M_1^\theta + M_2^\theta + M_{1\&2}^\theta
	\; = \;
	\Id - R^{\hat{N}}
\label{eq:multi-photonM}
\end{align}
where $\Pi_n$ is the ($\theta$-invariant) projector onto the $n$-photon sector of the Fock space. The expression on the right hand side of Eq.~\eqref{eq:multi-photonM} is invariant under change of rotation angle $\theta$: hence, we are in a (homogeneous) fair sampling situation. 

We also remark that $M_\cm^\theta$ has no overlap with the vacuum, $\bra{\vac} M_\cm^\theta \ket{\vac} = 0$, which immediately implies that the filtered state
\begin{align}
	\rho_{\cm} 
	\; = \;
	\frac{1}{\cE_Q(\rho)}
	\sqrt{M_\cm^\theta} \, \rho \, \sqrt{M_\cm^\theta}
\end{align}
also does not have overlap with the vacuum. As a special case, notice that when $\eta=1$ (the detector are lossless) we have $M_\cm^\theta = \Id - \proj{\vac}$ and therefore post-selecting is equivalent to using a filtered state where the vacuum component has been projected out.

\subsection{Strong fair sampling with polarization analysers}

We now look under which conditions the \emph{strong} version of fair sampling holds. From Eq.~\eqref{eq:multi-photonM} it follows that strong fair sampling is formally verified whenever we restrict the Hilbert space to the $k$-photon sector (for any fixed integer $k>0$). However, we consider only the case $k=1$, which already corresponds to a very stringent experimental requirement. That is, we imagine that in each round of the experiment exactly one photon enters the detector. Thus we have: 
\begin{align}
	\hil 
	\; = \; 
	\t{span} \big\{\, 
	\ket{\theta}  = b_\theta^\dag \ket{\vac}, \; 
	\sket{\theta^\bot} = b_{\theta^\bot}^\dag \ket{\vac}\,
	\big\} \;,	
\end{align}
with $\theta$ and $\theta^\bot$ any two orthogonal polarization directions. This yields:
\begin{align}
	M_\nc^\theta 
	\; & = \; 
	R^{\ket{\theta}\bra{\theta}} R^{\sket{\theta^\bot}\sbra{\theta^\bot}} 
	\; = \;
	R^\Id 
	\; = \;
	(1-\eta) \, \Id \nn\\
	M_\cm^\theta 
	\; & = \; 
	\Id - M_\nc^\theta 
	\; = \;
	\eta \, \Id \;. 
\label{eq:1photonM}
\end{align}
The expression on the right hand side of Eq.~\eqref{eq:1photonM} is manifestly independent of $\theta$ and is proportional to the identity: the (homogeneous) strong fair sampling assumption holds in this case.

\subsection{Experimental applicability}

As discussed, the polarization analyser satisfies strong fair sampling only when the input quantum state consists of exactly one photon. This is a very demanding condition that is not met in any experiment that we know of, as currently there is no technology capable of producing individual photons with high fidelity and (almost) unit efficiency \cite{FSS18,SP19}. To this end, the best we can achieve is through the \emph{heralded} preparation of a photon, e.g., by first producing a photon pair and exploit the fact that one of the two photons has triggered a detector to infer the presence of the second one in a separated optical path. Even in this case, though, the applicability of strong fair sampling is debatable, since losses are always present in practice. To this end, we show in Section~\ref{sec:state-dependent} how to extend the results to input states that can only be approximated as a single-photon states.

When the input quantum is not restricted to the single-photon sector, we have shown that the polarization analyser satisfies (weak) fair sampling. This happens, e.g., when the state has a (significant) overlap with the vacuum. However, since the polarization analyser can click only when one or more photons are present in the apparatus, the filtered state does not have overlap with the vacuum.

We could then consider sources that (up to an excellent degree of accuracy) emit at most one photon at a time. In these cases, the \emph{filtered} state will have exactly one photon entering the polarization analyser. We could also consider sources that could emit multiple photons at a time, as is the case in spontaneous up- and down-conversion processes in non-linear crystals. In these cases, the filtered quantum state does not contain the vacuum, but it does contain multi-photon components. Noticing that the more photons enter the device, the higher is the probability of a detection we deduce that, correspondingly, the filtered state $\rho_\cm$ has more weight on large photon-number components, compared to the experimental state $\rho$.

Finally, we can give a concrete recommendation for proof-of-principle demonstrations of device-independent protocols relying on post-selected data. The measurement device should be extensively tested and calibrated. Subsequently, one should publish the best available estimation of the efficiency of the device (the POVM elements $M_\cm^\theta$), together with the uncertainty to which they are known. From an abstract perspective, this is less information than having the individual POVM description of $M_1^\theta, M_1^\theta$, and $M_{1\&2}^\theta$, which would be required in device-specific protocols (such as, say, the BB84 protocol). Moreover, in some cases the estimation of $M_\cm^\x$ might be experimentally  easier than the estimation of all $M_\a^\x$ elements, for instance when the number of possible outcomes (the size of the set $\A$) is large. Provided that the fair sampling condition is satisfied, this approach provides the means to demonstrate the fundamental elements of device-independent protocols with post-selected data, avoiding the stringent requirement of having a extremely high detector efficiency.

\section{Approximate fair sampling}
\label{sec:approximate_fair sampling}

Here we investigate what happens when the fair sampling assumption is not satisfied exactly. Intuitively, we expect that the probability distributions that can be obtained after post-selection are close to those that can be obtained measuring the filtered state $\rho_\cm$ with a lossless device. And indeed, a small deviation from the exact fair sampling condition induces a small perturbation in the post-selected probability distribution (with linear dependence, to the leading order). In turn, the fact that the post-selected probabilities are close to the ideal ones implies, e.g., that the experimental expectation values of Bell operators are also close to the ideal value.

\subsection{Bound on the total variation distance}

As a starting point, we recall that $M_{C,\cm}$ is diagonal, hence we have $M_\cm = \sum_{\x} \cE_C(\x) \proj{\x} \otimes M_{Q,\cm}$ or, equivalently, $M_\cm^\x = \cE_C(\x)  M_{Q,\cm}$. We can rewrite this as:
\begin{align}
\label{eq:starting_point}
	\cE_C(\x) \,\Pi_\cm
	\; = \;
	M_{Q,\cm}^{-1/2} M_\cm^\x M_{Q,\cm}^{-1/2}  
\end{align}
where $\Pi_\cm$ is the projector onto the support of $M_{Q,\cm}$ (and on the support of $M_{Q,\cm}^{-1/2} M_\cm^\x M_{Q,\cm}^{-1/2}$ as well) and thus we have
\begin{align}
	 \cE_C(\x)
	 \; = \;
	 \norm{M_{Q,\cm}^{-1/2} M_\cm^\x M_{Q,\cm}^{-1/2}} \;.
\end{align}
Here and in the following, the norms are operator norms. We then consider small deviations from fair sampling as expressed in Eq.~\eqref{eq:starting_point}.

\begin{proposition}[Approximate fair sampling]
\label{prop:approximate_fs}
Consider a lossy device $\cM$ having POVM elements $M_\a^\x$ and $M_\cm^\x = \sum_{\a\in \A} M^\x_\a$. We suppose that there exists a positive semi-definite operator $M_{Q,\cm}\succcurlyeq 0$ such that each $M_\cm^\x$ satisfies, for some $\epsilon \in [0,1)$:
\begin{align}
	\norm{\,
	\Pi_\cm - 
	\frac {M_{Q,\cm}^{-1/2} M_\cm^\x M_{Q,\cm}^{-1/2}}
	{\norm{M_{Q,\cm}^{-1/2} M_\cm^\x M_{Q,\cm}^{-1/2}}}
	\,}
	\; \leq \;
	\epsilon
	\label{eq:approx}	
\end{align}
where $\Pi_\cm$ is the projector on the support of $M_{Q,\cm}$. We also assume $\Pi_\cm M_\cm^\x \Pi_\cm = M_\cm^\x$ for all $\x \in \X$ and $\Pi_\cm \rho \, \Pi_\cm \neq 0$ for the experimentally relevant input states.

Then, there exists a normalized quantum state $\rho_\cm$, which can be probabilistically filtered from $\rho$, and there exists a lossless device $\ocM$ such that:
\begin{align}
	\sum_{\a\in\A} \frac{1}{2}
	\left\vert\, 
	\frac{\Tr(M_\a^\x \rho)}{\Tr(M_\cm^\x \rho)} -
	\Tr(\bar{M}_\a^\x \rho_\cm)
	\,\right\vert
	\; \leq \;
	\frac{\epsilon}{1 - \epsilon}
	\label{eq:approx_result}	
\end{align}

When the measurement device $\cM$ satisfies inequality~\eqref{eq:approx} we say that the $\epsilon$-\emph{approximate} (weak) fair sampling assumption holds.
\end{proposition}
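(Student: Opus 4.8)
The plan is to build the filtered state $\rho_\cm$ and the lossless device $\ocM$ explicitly out of the data $M^\x_\a$ and $M_{Q,\cm}$, and then to bound the perturbation of the post-selected statistics by the operator-norm slack appearing in~\eqref{eq:approx}. For the state I would take $\rho_\cm := M_{Q,\cm}^{1/2}\,\rho\,M_{Q,\cm}^{1/2}/\Tr(M_{Q,\cm}\rho)$, i.e.\ the heralded output of the probabilistic filter with Kraus operator $\sqrt{M_{Q,\cm}}$; this is well defined and normalized because the hypothesis $\Pi_\cm\,\rho\,\Pi_\cm\neq 0$ forces $\Tr(M_{Q,\cm}\rho)>0$, and it is by construction obtained from $\rho$ by a local probabilistic operation, as required. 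Note that $\rho_\cm$ is supported on $\Pi_\cm$.

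For the measurement I would start from the candidates $\widetilde M^\x_\a := M_{Q,\cm}^{-1/2}M^\x_\a M_{Q,\cm}^{-1/2}$ — these are well defined because the assumption $\Pi_\cm M^\x_\cm\Pi_\cm = M^\x_\cm$ forces every $M^\x_\a$ to be supported on $\Pi_\cm$ — and renormalize them per setting, $\bar M^\x_\a := \widetilde M^\x_\a/c_\x$ with $c_\x := \norm{M_{Q,\cm}^{-1/2}M^\x_\cm M_{Q,\cm}^{-1/2}}$. Writing $N^\x := M_{Q,\cm}^{-1/2}M^\x_\cm M_{Q,\cm}^{-1/2}$, one has $\sum_\a\bar M^\x_\a = N^\x/c_\x$, which is positive semi-definite, supported on $\Pi_\cm$ and of unit operator norm, hence $N^\x/c_\x\preccurlyeq\Pi_\cm$; combined with~\eqref{eq:approx} this gives the two-sided sandwich $(1-\epsilon)\,\Pi_\cm\preccurlyeq N^\x/c_\x\preccurlyeq\Pi_\cm$, so the ``deficit'' $D^\x := \Pi_\cm - N^\x/c_\x$ obeys $0\preccurlyeq D^\x\preccurlyeq\epsilon\,\Pi_\cm$. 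I would then complete the POVM by redistributing the deficit with arbitrary fixed weights $w_\a\ge 0$, $\sum_\a w_\a=1$: the lossless device $\ocM$ has elements $\bar M^\x_\a + w_\a D^\x$, which are positive semi-definite and sum to $\Pi_\cm$ — all that is needed since $\rho_\cm$ is supported on $\Pi_\cm$.

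What remains is bookkeeping. Using cyclicity of the trace together with $M_{Q,\cm}^{1/2}M_{Q,\cm}^{-1/2}=\Pi_\cm$ and $\Pi_\cm M^\x_\a\Pi_\cm = M^\x_\a$, one checks that $q_\a:=\Tr(\bar M^\x_\a\rho_\cm)$ satisfies $\Tr(M^\x_\a\rho)=c_\x\,\Tr(M_{Q,\cm}\rho)\,q_\a$; summing over $\a$ gives $\Tr(M^\x_\cm\rho)=c_\x\,\Tr(M_{Q,\cm}\rho)\,Q$ with $Q:=\sum_\a q_\a=\Tr\big((N^\x/c_\x)\rho_\cm\big)$. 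Hence the true post-selected distribution is $p_\a=\Tr(M^\x_\a\rho)/\Tr(M^\x_\cm\rho)=q_\a/Q$, whereas the distribution produced by $\ocM$ on $\rho_\cm$ is $p'_\a=\Tr\big((\bar M^\x_\a + w_\a D^\x)\rho_\cm\big)=q_\a+w_\a(1-Q)$. The sandwich forces $Q\in[1-\epsilon,1]$, and $p_\a-p'_\a=(1-Q)(q_\a/Q-w_\a)$, so $\tfrac12\sum_\a|p_\a-p'_\a|=(1-Q)\,\tfrac12\sum_\a|q_\a/Q-w_\a|\le 1-Q\le\epsilon\le\epsilon/(1-\epsilon)$, which is the claimed bound (the $1/(1-\epsilon)$ is slack in this route but appears naturally as soon as one bounds $1/Q$ by $1/(1-\epsilon)$ at an intermediate step).

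The conceptual crux — the step I expect to need the most care — is that the filtered state must be the \emph{same} for every setting $\x$, which rules out the obvious per-setting lossless device built from $\sqrt{M^\x_\cm}$ and forces filtering with the single operator $\sqrt{M_{Q,\cm}}$; the induced per-setting lossless measurement is then only \emph{approximately} normalized, and the entire content of the argument is the operator sandwich $(1-\epsilon)\Pi_\cm\preccurlyeq N^\x/c_\x\preccurlyeq\Pi_\cm$, which shows this normalization defect is at most $\epsilon$ uniformly in $\x$. Everything downstream — completing the POVM and propagating the defect to a total-variation estimate — is routine and costs only a term linear in $\epsilon$.
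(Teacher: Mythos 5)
Your construction coincides with the paper's: the same filtered state $\rho_\cm = \sqrt{M_{Q,\cm}}\,\rho\,\sqrt{M_{Q,\cm}}/\Tr(M_{Q,\cm}\rho)$, the same conjugated operators $\widetilde{M}^\x_\a = M_{Q,\cm}^{-1/2}M^\x_\a M_{Q,\cm}^{-1/2}$ normalized by the operator norm of $\widetilde{M}^\x_\cm$, and the same completion of the lossless POVM by distributing the positive deficit $D^\x = \Pi_\cm - \widetilde{M}^\x_\cm/\|\widetilde{M}^\x_\cm\|$ across outcomes (the paper fixes $w_\a = 1/|\A|$; your arbitrary weights are a harmless generalization). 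All the supporting observations you flag — positivity of $D^\x$, the two-sided sandwich $(1-\epsilon)\Pi_\cm \preccurlyeq \widetilde{M}^\x_\cm/\|\widetilde{M}^\x_\cm\| \preccurlyeq \Pi_\cm$, the support conditions propagating from $M^\x_\cm$ to each $M^\x_\a$ — appear in the paper's Appendix D as well.

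Where you genuinely depart is the final estimate. The paper bounds the total variation distance by a chain of term-by-term inequalities (triangle inequality, separate maximizations over a sign variable, and the estimate $1/\Tr(\widetilde{M}^\x_\cm\rho_\cm) \le 1/[(1-\epsilon)\|\widetilde{M}^\x_\cm\|]$), landing on $\epsilon/(1-\epsilon)$. Your identity $p_\a - p'_\a = (1-Q)\,(q_\a/Q - w_\a)$ with $Q = \Tr\big((\widetilde{M}^\x_\cm/\|\widetilde{M}^\x_\cm\|)\rho_\cm\big) \in [1-\epsilon,1]$ exposes the whole discrepancy as a single normalization defect, and since $q_\a/Q$ and $w_\a$ are both probability distributions their total variation distance is at most $1$, giving the bound $1-Q \le \epsilon$. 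This is strictly sharper than the stated $\epsilon/(1-\epsilon)$; the authors themselves conjecture their bound is not tight, and your bookkeeping confirms that at least the factor $1/(1-\epsilon)$ is an artifact of their estimate rather than of the construction. The proof is correct; the only implicit assumption (shared with the paper) is $M^\x_\cm \neq 0$, without which neither the hypothesis~\eqref{eq:approx} nor the post-selected probability is well defined.
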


\begin{proof}
See Appendix\!~\ref{app:proof}.
\end{proof}

A few remarks are in order. First, in the limit $\epsilon \to 0$ approximate fair sampling becomes the condition for (exact) fair sampling. Second, the left hand side of inequality~\eqref{eq:approx_result} is the total variation distance between the post-selected probability distribution ${\Pr}_{\cM}^{\t{p.s.}}(\a|\x,\rho)$ and the distribution ${\Pr}_{\ocM}(\a|\x,\rho_\cm)$ obtained by measuring the filtered state $\rho_\cm$ with a lossless detector\footnote{The total variation distance is the classical analogue of the trace distance for density matrices. It expresses the expected distinguishability between two distributions using random samples from one of the distributions.}. Finally, we can specialize the Proposition to approximate \emph{strong} fair sampling by consider the special condition $M_{Q,\cm} = \Id$. In this case, Eq.~\eqref{eq:approx} simply becomes:
\begin{align}
	\norm{\,
	\Id - 
	\frac {M_\cm^\x}
	{\norm{M_\cm^\x }}
	\,}
	\; \leq \;
	\epsilon
\end{align}
and the filtered state is equal to the experimental state, $\rho_\cm = \rho$.

\subsection{Approximate fair sampling in multipartite settings}
\label{sec:appr_str_fs_consequences}

Proposition~\ref{prop:approximate_fs} links an inequality regarding the POVM of a measurement device to an inequality on the probability distributions that can be observed using such a device. In a multipartite quantum experiment this probability distribution corresponds to the marginal distribution that is observed by a single detector, ignoring the measurement outcomes of all other devices involved in the experiment. Therefore, Proposition~\ref{prop:approximate_fs} does not (directly) bound the deviation for the global distribution over all possible outcome configurations of the devices. However, this shortcoming can be easily amended, using the following simple idea.

Consider a quantum experiment where $N$ parties are involved, measuring a multipartite state $\Psi$ with lossy devices $\cM_1, \ldots, \cM_N$ which approximately satisfy fair sampling. Then, we can equivalently consider the multipartite measurement device $\overrightarrow{\cM}$ consisting of the collection of these $N$ devices, whose settings (and outputs) consists in the collection of settings (and outputs) of the individual devices. It is straightforward to verify that this joint device then satisfies $\epsilon_\t{tot}$-approximate fair sampling; the total approximation error $\epsilon_\t{tot}$ is essentially given by the sum of the approximation errors for the individual devices. Therefore, for this multipartite device Proposition~\ref{prop:approximate_fs} directly applies. That is, it is possible to bound to the deviation of the post-selected probability distribution from the result of measuring a filtered state $\Psi_\cm$ with $N$ lossless devices. More details can be found in Appendix~\ref{app:multipartite}. As a further consequence, suppose we want to measure the expectation value of a $N$-partite Bell operator $B$. Using the (slightly skewed) post-selected distribution the relative error on $\langle B \rangle$ is upper bounded by $2 \epsilon_\t{tot}$ -- for the proof, see Appendix~\ref{app:X}.

Finally, we remark that if the $N$ separated devices satisfy approximate \emph{strong} fair sampling, then the entire multipartite setting also satisfies approximate strong fair sampling; this implies that the outcome statistics for the collection of devices is close (in total variation distance) to what could be obtained measuring the actual experimental state $\Psi$ with lossless devices. In this case, the global approximation error $\epsilon_\t{tot}$ for the multipartite device is also roughly equal the sum of the errors for individual devices.

\section{Approximate fair sampling in quantum optics}
\label{sec:appr_fs_in_quantum_optics}

In Section~\ref{sec:fs_in_quantum_optics} we have shown that a polarization analyser satisfies fair sampling provided that the two NPNR detectors have equal efficiency $\eta$. Now, we will consider the case where the efficiencies of the NPNR detectors are very similar, but not exactly equal. In this case, we show that the polarization analyser satisfies \emph{approximate} fair sampling; moreover, if the input state contains exactly one photon, \textit{approximate} strong fair sampling holds.

\subsection{Polarization analyser with unequal detectors}
\label{sec:unequal_detectors}

We generalize the quantum modelling of the polarization analyser given in Section~\ref{sec:fs_in_quantum_optics} to the case in which the two detectors have similar (but not exactly equal) efficiencies, $\eta_1$ and $\eta_2$. We assume without loss of generality $\eta_2 \geq \eta_1$ and we write $R = R_2 \leq R_1 = R \,(1+\delta)$, with $\delta \geq 0$. This results in the following four POVM elements:
\begin{align}
\begin{split}
	M_\nc^\theta \, := \;
    E_\nc^{\theta} \tot E_\nc^{\theta^\bot} \; & = \; 
    R^{\hat{N}_\theta} (1+\delta)^{\hat{N}_{\theta}} \, R^{\hat{N}_{\theta^\bot}}
    \; = \; R^{\hat{N}} (1+\delta)^{\hat{N}_{\theta}} \\
	M_1^\theta \; := \;
	E_\nc^{\theta} \tot E_\cm^{\theta^\bot} \; & = \; 
    R^{\hat{N}_\theta} (1+\delta)^{\hat{N}_{\theta}} \,\big(\Id-R^{\hat{N}_{\theta^\bot}} \big) \\
    	M_2^\theta \; := \;
    	E_\cm^{\theta} \tot E_\nc^{\theta^\bot} \; & = \; 
    \big[\Id-R^{\hat{N}_\theta} (1+\delta)^{\hat{N}_{\theta}}\big] \, R^{\hat{N}_{\theta^\bot}} \\
    	M_{1\&2}^\theta \; := \;
    	E_\cm^{\theta} \tot E_\cm^{\theta^\bot} \; & = \; 
    \big[\Id-R^{\hat{N}_\theta} (1+\delta)^{\hat{N}_{\theta}}\big] \, \big(\Id-R^{\hat{N}_{\theta^\bot}} \big) \;.
\end{split}
\end{align}

\subsection{Approximate fair sampling with a polarization analyser}

We now consider photonic states which have a (possibly large) component of vacuum and a certain probability of emitting more than one photon, as common in experiments that use up- or down-conversion processes. The Hilbert space is 
\begin{align}
	\hil  \; = \;
	\t{span} \big\lbrace\,
	\big(b_\theta^\dag\big)^{n_{\theta}} \big(b_{\theta^\bot}^\dag\big)^{n_{\theta^\bot}} \ket{\vac}
	\; \big| \;
	n_\theta, n_{\theta^\bot} \in \mathds{N} \,\big\rbrace \;.
\end{align}
The operator $M_\cm^\theta$ has an explicit (small) dependence from the setting $\theta$: 
\begin{align}
\begin{split}
	M_\nc^\theta 
	\; & = \; 
	R^{\hat{N}} \, (1+\delta)^{\hat{N}_{\theta}} 
	\\
	M_\cm^\theta 
	\; & = \;
	\Id - R^{\hat{N}} \, (1+\delta)^{\hat{N}_{\theta}} \;.
\label{eq:multi-photon_approx}
\end{split}
\end{align}

We now have to find a suitable positive semi-definite operator $M_{Q,\cm}$ and a projector $\Pi_\cm$ in order to show that weak fair sampling is approximately satisfied. We choose:
\begin{align}
\label{eq:M_Q}
	M_{Q,\cm} & := \; \Id - R^{\hat{N}}  \\
	\Pi_\cm   \; & := \; \Id - \proj{\vac} \;.
\end{align}
In Appendix\!~\ref{app:F} we show that, with these definition, the condition
\begin{align}
	\norm{\,
	\Pi_\cm - 
	\frac {M_{Q,\cm}^{-1/2} M_\cm^\theta M_{Q,\cm}^{-1/2}}
	{\norm{M_{Q,\cm}^{-1/2} M_\cm^\theta M_{Q,\cm}^{-1/2}}}
	\,}
	\; & = \; 
	\frac{1 - \eta }{\eta}\, \delta 
\end{align}
holds. Thus, we can directly apply the results on approximate weak fair sampling derived in Proposition~\ref{prop:approximate_fs}. The maximum deviation in the post-selected probability and those obtainable from $\rho_\cm$ is of order $\delta \, (1-\eta)/\eta$, in total variation distance.

We can also characterize the filtered state $\rho_\cm$ that allows to (approximately) reproduce the post-selected probabilities:
\begin{align}
	\rho_\cm 
	\; = \;
	\frac{1}{\cE_{Q}(\rho)}
	\sqrt{M_{Q,\cm}} \, \rho \, \sqrt{M_{Q,\cm}}
	\; = \;
	\frac{1}{\cE_{Q}(\rho)}
	\sqrt{\Id - R^{\hat{N}}} \, \rho \, \sqrt{\Id - R^{\hat{N}}} \;.
\end{align}
In particular, the state $\rho_\cm$ has no overlap with the vacuum.

\subsection{Approximate strong fair sampling with a polarization analyser}

In this Section, we restrict to cases where exactly one photon is entering the apparatus, i.e., we assume that we are employing an (almost) perfect single-photon source. The Hilbert space is
\begin{align}
	\hil 
	\; = \; 
	\t{span} \big\{\, 
	\ket{\theta}  = b_\theta^\dag \ket{\vac}, \; 
	\sket{\theta^\bot} = b_{\theta^\bot}^\dag \ket{\vac}\,
	\big\} \;,	
\end{align}
and the POVM elements associated to a failed/successful detection are:
\begin{align}
\begin{split}
	M_\nc^\theta \; & = \; 
	R \, (1+\delta)^{\ket{\theta}\bra{\theta}} \; = \;
	(1-\eta) \, \big(\Id + \delta \proj{\theta}\big) \\
	M_\cm^\theta \; & = \;
	\Id - M_\nc^\theta 
	~~~~~~~\: = \;
	\eta \, \Id  - (1-\eta) \delta \proj{\theta} \;. \label{eq:1photonA}
\end{split}
\end{align}
The POVM element $M_\cm^\theta$ has an explicit (small) dependence on $\theta$. In fact, we can explicitly show that the condition for approximate strong fair sampling holds. Assuming $\delta \geq 0$, we obtain
\begin{align}
\forall \theta \qquad
	\big|\big|{M_\cm^\theta}\big|\big|
	\; = \; 
	\norm{\,\eta \, \Id  - (1-\eta) \delta \sproj{\theta^\bot}\,}
	\; = \; \eta  \;,
\end{align}
and then
\begin{align}
	\norm{ \Id - \frac{M_\cm^\theta}{\norm{M_\cm^\theta}}} \; = \;
	\norm{ \Id - \frac{M_\cm^\theta}{\eta}} \; = \;
	\norm{ \frac{1-\eta}{\eta}\, \delta \, \sproj{\theta^\bot}} \; = \;
	\frac{1-\eta}{\eta}\, \delta \;.	
\end{align}

This result directly allows us to employ Proposition~\ref{prop:approximate_fs}, specialized to the case of strong fair sampling. We can conclude that the post-selected statistics obtained with this device can have a deviation (in total variation distance) at most of order $\delta \, (1-\eta)/\eta$ from the statistics obtained measuring the single-photon state with a lossless device. 

As previously discussed, single photon sources are not achievable with current technology, heralded photon preparation being the method that comes the closest to this goal. Therefore, in the next Section we analyse what happens in the case in which the input state is only approximately a single-photon state and has a small vacuum and multi-photon component.

\section{Assumptions on the input state}
\label{sec:state-dependent}

In this final Section, we consider the case in which the quantum state has a small component not belonging to the reference Hilbert space. We show that the outcome statistics is not overly affected by the presence of such perturbations. Next, we extend the fair sampling assumption to the case where the factorization of Eq.~\eqref{eq:def_filter} holds only for the specific state $\Psi$ that is employed in the (multipartite) quantum experiment. We call this the \emph{state-dependent} fair sampling assumption.

\subsection{Effects of imperfections in state preparation}

Consider an imperfectly prepared quantum state having the form
\begin{align}
	\hat\rho
	\ & \in \
	S(~~~\hil~~~ \oplus ~~~\hil^\bot~~) 
	\nonumber\\
	\hat\rho 
	\; & = \, 
	\left(
	\begin{array}{cc}
	(1-\epsilon') \, \rho & c \\
	c^\dag & \epsilon' \, \rho^\bot	
	\end{array}
	\right)
	\;,
\label{eq:rho_hat}
\end{align}
where $\rho$ and $\rho^\bot$ are normalized quantum states and $c$, $c^\dag$ are the coherence terms. The Hilbert space $\hil$ represents the set of states for which the measurement device has been calibrated, and $\hil^\bot$ is an orthogonal space containing all other degrees of freedom. The fair sampling assumption holds for states $\rho \in S(\hil)$, while the behaviour of the measurement device for an input state $\rho^\bot \in S(\hil^\bot)$ can be arbitrary.

As a concrete example, consider a photonic state $\hat\rho$ entering a polarization analyser. Strong fair sampling holds when the input consists of exactly one photon, but the experimental state $\hat\rho$ also possess a small component $\epsilon'\rho^\bot$ which overlaps with the vacuum and with multi-photon states. As a second example, a polarization analyser satisfies (weak) fair sampling when $\hat\rho$ consists of photons that are only distinguishable in polarization. That is, the photons must have the correct frequency, waveform, spatial mode, and so on, consistent with the specifications and calibration of the measurement device. However, the experimental state $\hat\rho$ may have a small component $\epsilon' \rho^\bot$ of photons that have different properties, for which the behaviour of the device is not known. If $\epsilon'$ is small, the deviation in the post-selected statistics will also be small; quantitatively, we have the following result.

\begin{proposition}
\label{prop:state}
Consider a quantum state $\hat\rho \in S(\hil \oplus \hil^\bot)$ as in Eq.~\eqref{eq:rho_hat}. Consider a lossy detector having POVM elements $\{\hat{M}_\a^\x\}$ acting on $\hil \oplus \hil^\bot$ and define $M_\a^\x := \Pi \hat{M}_\a^\x \Pi$, $\hat{M}_\cm^\x := \sum_{\a \in \A} \hat{M}_\a^\x$ and $M_\cm^\x := \sum_{\a \in \A} M_\a^\x$. Then, the following trace distance bound holds:
\begin{align}
\label{eq:result2}
	\sum_{\a\in\A} \frac{1}{2}
	\left\vert\, 
	\frac{\Tr(\hat{M}_\a^\x \hat\rho)}{\Tr(\hat{M}_\cm^\x \hat\rho)} -
	\frac{\Tr(M_\a^\x \rho)}{\Tr(M_\cm^\x \rho)}
	\,\right\vert
	\; \leq \;
	\frac{2 \left(\norm{c}_\Tr + \epsilon'\right)}
	{\max 
	\left\lbrace
	\Tr(\hat{M}_\cm^\x \hat\rho)
	\, , \,
	\Tr(M_\cm^\x \rho)
	\right\rbrace}
	\;,
\end{align}
where $\norm{c}_\Tr := \Tr(\sqrt{c^\dag c})$ is the trace norm of $c$, which satisfies $\norm{c}_\Tr \leq \sqrt{\epsilon'(1-\epsilon')}$.
\end{proposition}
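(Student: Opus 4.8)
The plan is to reduce the claim to a comparison of two \emph{unnormalized} outcome vectors and then convert an $\ell_1$ estimate into a total-variation bound. First I would split $\hat\rho = \tilde\rho + C$ into its block-diagonal part $\tilde\rho$, with blocks $(1-\epsilon')\rho$ and $\epsilon'\rho^\bot$, and its off-diagonal (coherence) part $C$ built from $c$ and $c^\dag$. The operator $C$ is Hermitian with $\norm{C}_\Tr = 2\norm{c}_\Tr$, since the nonzero eigenvalues of an off-diagonal Hermitian block matrix are exactly the singular values of $c$ taken with both signs. Introducing the unnormalized weights $p(\a):=\Tr(\hat M_\a^\x\hat\rho)$ and $q(\a):=\Tr(M_\a^\x\rho)$, with totals $P:=\Tr(\hat M_\cm^\x\hat\rho)=\sum_{\a\in\A}p(\a)$ and $Q:=\Tr(M_\cm^\x\rho)=\sum_{\a\in\A}q(\a)$, one has $\Tr(\hat M_\a^\x\rho)=\Tr(M_\a^\x\rho)=q(\a)$ because $\rho$ is supported on $\hil$; and, writing $r(\a):=\Tr(\hat M_\a^\x\rho^\bot)\ge 0$, the POVM normalization over $\A'=\{\nc\}\cup\A$ gives $\sum_{\a\in\A}r(\a)\le 1$. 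This yields the decomposition $p(\a) = (1-\epsilon')\,q(\a) + \epsilon'\,r(\a) + s(\a)$ with $s(\a):=\Tr(\hat M_\a^\x C)$.

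The first key estimate is the $\ell_1$ distance of the unnormalized vectors,
\begin{align*}
	\sum_{\a\in\A}\big|p(\a)-q(\a)\big|
	\;\le\; \epsilon'\!\sum_{\a\in\A}q(\a) + \epsilon'\!\sum_{\a\in\A}r(\a) + \sum_{\a\in\A}|s(\a)|
	\;\le\; 2\epsilon' + 2\norm{c}_\Tr \,,
\end{align*}
using $\sum_\a q(\a)=Q\le 1$, $\sum_\a r(\a)\le 1$, and the cross-term bound $\sum_{\a\in\A}|s(\a)|\le 2\norm{c}_\Tr$. For the latter I would write $C = C_+ - C_-$ via its positive and negative parts, so that $|s(\a)|\le\Tr(\hat M_\a^\x C_+)+\Tr(\hat M_\a^\x C_-)$ by positivity of $\hat M_\a^\x$; summing over $\a\in\A$ and using $\sum_{\a\in\A'}\hat M_\a^\x=\Id$ then gives $\sum_{\a\in\A}|s(\a)|\le\Tr C_+ + \Tr C_- = \norm{C}_\Tr = 2\norm{c}_\Tr$.

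The second key step is the elementary normalization lemma that, for nonnegative vectors with totals $P,Q>0$,
\begin{align*}
	\tfrac12\sum_{\a\in\A}\Big|\tfrac{p(\a)}{P}-\tfrac{q(\a)}{Q}\Big|
	\;\le\;\frac{1}{\max\{P,Q\}}\sum_{\a\in\A}\big|p(\a)-q(\a)\big|\,,
\end{align*}
which follows from the identity $\tfrac{p(\a)}{P}-\tfrac{q(\a)}{Q}=\tfrac{p(\a)-q(\a)}{P}-\tfrac{q(\a)(P-Q)}{PQ}$, the triangle inequality, $\sum_\a q(\a)=Q$, and $|P-Q|\le\sum_{\a\in\A}|p(\a)-q(\a)|$ --- the maximum arising by exchanging the roles of $p$ and $q$. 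Chaining the two steps gives exactly~\eqref{eq:result2}. For the side remark $\norm{c}_\Tr\le\sqrt{\epsilon'(1-\epsilon')}$, positivity of $\hat\rho$ gives a factorization $c=\big((1-\epsilon')\rho\big)^{1/2}K\big(\epsilon'\rho^\bot\big)^{1/2}$ with $K$ a contraction, and Hölder's inequality for Schatten norms yields $\norm{c}_\Tr\le\norm{((1-\epsilon')\rho)^{1/2}}_2\,\norm{(\epsilon'\rho^\bot)^{1/2}}_2=\sqrt{\Tr[(1-\epsilon')\rho]\,\Tr[\epsilon'\rho^\bot]}=\sqrt{\epsilon'(1-\epsilon')}$.

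I expect the main obstacle to be the bookkeeping of the cross term $s(\a)$: one has to be careful that the POVM completeness relation involves $\A'$ (which includes $\nc$) rather than $\A$, so that $\sum_{\a\in\A}|s(\a)|$ is genuinely controlled by $\norm{C}_\Tr$, and that the factor $2$ from $\norm{C}_\Tr=2\norm{c}_\Tr$ (together with $Q\le 1$ and $\sum_{\a}r(\a)\le 1$) propagates through the normalization lemma so as to reproduce the stated constant $2(\norm{c}_\Tr+\epsilon')$ exactly.
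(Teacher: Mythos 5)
Your proposal is correct and follows essentially the same route as the paper: both proofs first bound the unnormalized $\ell_1$ distance $\sum_{\a\in\A}\left\vert\Tr(\hat M_\a^\x\hat\rho)-\Tr(M_\a^\x\rho)\right\vert$ by $2\norm{c}_\Tr+2\epsilon'$ (you do it term by term from the block decomposition of $\hat\rho$, the paper via $\norm{\hat\rho-\rho}_\Tr$ and the operational characterization of the trace norm, which amounts to the same accounting) and then apply the identical normalization lemma yielding the $1/\max\{P,Q\}$ prefactor. The only genuinely different ingredient is your proof of the side remark $\norm{c}_\Tr\leq\sqrt{\epsilon'(1-\epsilon')}$ via the contraction factorization of a positive block matrix and H\"older's inequality, where the paper instead uses a pure-state decomposition of $\hat\rho$ together with Cauchy--Schwarz; both are valid.
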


\begin{proof}
See Appendix~\!\ref{app:proof2}.
\end{proof}

In general, the deviation in the post-selected statistics can be of order $\sqrt{\epsilon'}$. However, if the state has no coherence terms ($c=0$), then the deviation is only of order $\epsilon'$. Thermal light is an instance of this case, being completely incoherent in the photon-number basis.

We also remark that in the realistic case where both device calibration and state preparation are not perfect, the bounds given in Proposition~\ref{prop:state} and in Proposition~\ref{prop:approximate_fs} can be combined. The deviation from the ideal case is at most the sum of the bounds given in Eq.~\eqref{eq:approx_result} and in Eq.~\eqref{eq:result2}, as can be shown applying the triangle inequality.

\subsection{State-dependent fair sampling}

We suppose that the measurement device $\cM_1 =\ocM_1 \circ \cF_1$ used by the first party of a Bell-like experiment does not satisfy fair sampling (not even approximately). We require, instead, that the application of the successful branch $\cF_{1,\cm}: S(\hil_\X \otimes \hil_{1}) \to S(\hil_\X \otimes \hil_{1})$ to the first component of the state $\Psi \in S(\hil_{1}\tot \cdots \tot \hil_{N})$ results in:
\begin{align}
\forall \x \in \X \qquad
	\big(\cF_{1,\cm} \otimes \t{Id}_\t{R} \big) \,
	\big(\proj{\x} \otimes \Psi \big)
	\; = \;
	\widetilde{\cF}_{1,C,\cm}\big(\proj{\x}\big) \otimes 
	\cE_{1,Q}(\Psi) \, \Psi_{1,\cm}	
\end{align}
where $\t{Id}_\t{R}$ is the identity channel on the rest of the parties, $\Psi_\cm$ is a (normalized) quantum state that can be locally filtered from $\Psi$, and $\cE_{1,Q}(\Psi)$ is proportional to the probability of successfully filtering the state $\Psi_{1,\cm}$ from $\Psi$. That is, there exists a CP map $\widetilde{\cF}_{1,Q,\cm}$ such that:
\begin{align}
	\Psi_{1,\cm} 
	\; = \;
	\frac{1}{\cE_{1,Q}(\Psi)} \,
	\big( \widetilde{\cF}_{1,Q,\cm} \otimes \t{Id}_\t{R} \big) \,
	\big( \Psi \big)	 \;.
\end{align}
Then, the same results as in Proposition~\ref{prop:fair_sampling} apply in this case. Consider, in fact, the post-selection corresponding to first device clicking (we ignore the post-selection applied by the other parties for now). Then, the post-selected probabilities that one observes in the realized quantum experiment can be reproduced by an ideal experiment where the first device is lossless, defined as in Eq.~\eqref{eq:M'}, and the state that has been distributed to the $N$ parties is the filtered state $\Psi_\cm$.

We can then apply the same reasoning to each of the parties involved in the experiment. Suppose that all the measurement devices satisfy \emph{state-dependent} fair sampling; then, the global post-selection (where the devices click contemporaneously) produces probabilities that are equal to those one would obtain when all parties use lossless devices, defined as in Eq.~\eqref{eq:M'}, acting on the normalized quantum state:
\begin{align}
	\Psi_\cm 
	\; = \;
	\frac{1}{\cE_{1,Q}(\Psi) \cdots \cE_{N,Q}(\Psi)} \,
	\big(\widetilde{\cF}_{1,Q,\cm} \otimes \cdots \otimes \widetilde{\cF}_{N,Q,\cm} \big) \,
	\big( \Psi \big)	 \;,	
\end{align}
where each $\widetilde{\cF}_{k,Q,\cm}$ is a local extraction map that depends on $\cM_k$ and on $\Psi$.

We can also consider approximate versions of state-dependent version fair sampling, taking into account detector and state preparation imperfections. The results previously derived for approximate fair sampling would also hold in this case.

\section*{Discussion}

In this paper, we have analysed the effect of post-selection on the outcome statistics of a measurement apparatus in the context of Bell-like experiments. To this end, we have employed a general and fully quantum model of the measurement device, described in terms of POVM elements. We have identified a condition, equivalent to the one of Ref.~\cite{Berry10}, which allows the measurement statistics to exactly correspond to the statistics of some ideal quantum experiment, involving \emph{ideal} lossless detectors and a \emph{filtered} quantum state. The filtered state is (possibly) different from the actual experimental state but can be obtained from it through local probabilistic filters. In simple terms, this condition amount to the fact that the device acts independently on the classical and quantum inputs it receives, conditioned on having had a successful detection. Differently from what is done in most of the previous literature, it is this condition that we identify as the (weak) fair sampling assumption.

We argue that the weak notion of fair sampling is the most useful and applicable one, especially in the context of quantum optics. In fact, in quantum optics experiments the produced photonic states have a component of vacuum, but the common praxis is to characterize properties (such as fidelity, purity, or expectation values of observables) of a quantum state where the vacuum component has been projected out. This exactly corresponds to the filtering operation induced by fair sampling.

To make the result of this work applicable in real experimental settings, we have investigated the effect of small deviations from exact fair sampling. We have shown that imperfections in the calibration of the device result in a deviation in the post-selected probabilities that scales essentially linearly in the size of the miscalibration, but is amplified by a factor inversely proportional to the efficiency of the device. For a polarization analyser to satisfy fair sampling the two photon detectors are required to have the same efficiency, which could be enforced by artificially decreasing the efficiency of one of the detectors. Our analysis of approximate fair sampling takes care of any remaining efficiency mismatch. We have also considered the effect of imperfect state preparations, which lead the input state to have a small component not belong to the Hilbert space where the measurement device satisfies fair sampling. In this case, the presence of off-diagonal coherence terms in the density matrix can make the error in the statistics to scale (unfavourably) as the square root of the state preparation error.

We have also argued that the measurement devices can be securely used in cryptographic protocols, provided that they satisfy fair sampling. It is paramount in this case that fair sampling holds also from the perspective of the adversary. If, for instance, the adversary possesses a more complete description (a.k.a., a purification) of the measurement devices where fair sampling does not hold, explicit attack strategies can be exhibited.

\section*{Acknowledgements}

We thank Matthias Bock and Christoph Becher for discussions which motivated us to work on the effect of post-selections on device-independent claims. We acknowledge  funding  by  the Swiss National Science Foundation (SNSF) through the Grant PP00P2-179109 and by the Army Research  Laboratory  Center  for  Distributed  Quantum Information via the project SciNet.

\bibliographystyle{unsrt}

\begin{thebibliography}{10}


\bibitem{Berry10}
D. W. Berry, H. Jeong, M. Stobińska, and T. C. Ralph,
\emph{Fair-sampling assumption is not necessary for testing local realism},
\href{https://doi.org/10.1103/PhysRevA.81.012109} 
{Physical Review A \textbf{81(1)}, 012109 (2010)}.


\bibitem{Had09}
R. H. Hadfield,
\emph{Single-photon detectors for optical quantum information applications},
\href{https://doi.org/10.1038/nphoton.2009.230}
{Nature Photonics \textbf{3(12)}, 696--705 (2009)}.


\bibitem{CHSH69}
J. F. Clauser, M. A. Horne, A. Shimony, and  R. A. Holt,
\emph{Proposed experiment to test local hidden-variable theories},
\href{https://doi.org/10.1103/PhysRevLett.23.880}
{Physical Review Letters \textbf{23(15)}, 880--884 (1969)}.


\bibitem{CH74}
J. F. Clauser, and M. A. Horne, 
\emph{Experimental consequences of objective local theories},
\href{https://doi.org/10.1103/PhysRevD.10.526}
{Physical Review D \textbf{10(2)}, 526--535 (1974)}.


\bibitem{Pearle70}
P. M. Pearle, 
\emph{Hidden-variable example based upon data rejection}, 
\href{https://doi.org/10.1103/PhysRevD.2.1418}
{Physical Review D \textbf{2(8)}, 1418--1425 (1970)}.


\bibitem{GM87}
A. Garg, and N. D. Mermin, 
\emph{Detector inefficiencies in the Einstein-Podolsky-Rosen experiment},
\href{https://doi.org/10.1103/PhysRevD.35.3831}
{Physical Review D \textbf{35(12)}, 3831--3835 (1987)}.


\bibitem{MP03}
S. Massar, and S. Pironio, 
\emph{Violation of local realism versus detection efficiency},
\href{https://doi.org/10.1103/PhysRevA.68.062109}
{Physical Review A \textbf{68(6)}, 062109 (2003)}.


\bibitem{PS11}
E. Pomarico, B. Sanguinetti, P. Sekatski, H. Zbinden, and N. Gisin,
\emph{Experimental amplification of an entangled photon: what if the detection loophole is ignored?}
\href{https://doi.org/10.1088/1367-2630/13/6/063031}
{New Journal of Physics \textbf{13(6)}, 063031 (2011)}.


\bibitem{GL11}
I. Gerhardt, Q. Liu, A. Lamas-Linares, J. Skaar, V. Scarani, V. Makarov, and C. Kurtsiefer,
\emph{Experimentally faking the violation of Bell’s inequalities},
\href{https://doi.org/10.1103/PhysRevLett.107.170404}
{Physical Review Letters \textbf{107(17)}, 170404 (2011)}.


\bibitem{RG13}
J. Romero, D. Giovannini, D. S. Tasca, S. M. Barnett, and M. J. Padgett,
\emph{Tailored two-photon correlation and fair-sampling: a cautionary tale},
\href{https://doi.org/10.1088/1367-2630/15/8/083047}
{New Journal of Physics \textbf{15(8)}, 083047 (2013)}.


\bibitem{JE15}
J. Jogenfors, A. M. Elhassan, J. Ahrens, M. Bourennane, and J. Å. Larsson. 
\emph{Hacking the Bell test using classical light in energy-time entanglement-based quantum key distribution},
\href{https://doi.org/10.1126/sciadv.1500793} 
{Science Advances \textbf{1(11)}, 1500793 (2015)}.


\bibitem{Jogenfors17}
J. Jogenfors, 
\emph{Breaking the Unbreakable: Exploiting Loopholes in Bell’s Theorem to Hack Quantum Cryptography},
Linköping University Electronic Press (2017).


\bibitem{NoLoop1}
B. Hensen et.\ al.,
\emph{Loophole-free Bell inequality violation using electron spins separated by 1.3 kilometres},
\href{https://doi.org/10.1038/nature15759}
{Nature \textbf{526}, 682 (2015)}.


\bibitem{NoLoop2}
L. K. Shalm et.\ al., 
\emph{Strong loophole-free test of local realism}, 
\href{https://doi.org/10.1103/PhysRevLett.115.250402}
{Physical Review Letters \textbf{115(25)}, 250402 (2015)}.


\bibitem{NoLoop3}
M. Giustina et.\ al.,
\emph{Significant-loophole-free test of Bell’s theorem with entangled photons},
\href{https://doi.org/10.1103/PhysRevLett.115.250401}
{Physical Review Letters \textbf{115(25)}, 250401 (2015)}.


\bibitem{NoLoop4}
W. Rosenfeld, D. Burchardt, R. Garthoff, K. Redeker, N. Ortegel, M. Rau, and H. Weinfurter, \emph{Event-ready Bell test using entangled atoms simultaneously closing detection and locality loopholes},
\href{https://doi.org/10.1103/PhysRevLett.119.010402}
{Physical Review Letters \textbf{119(1)}, 010402 (2017)}.


\bibitem{NoLoop5}
M. H. Li, C. Wu, Y. Zhang, W. Z. Liu, B. Bai, Y. Liu, W. Zhang, Q. Zhao, H. Li, Z. Wang, L. You, W. J. Munro, J. Yin, J. Zhang, C.-Z. Peng, X. Ma, Q. Zhang, J. Fan, and J.-W. Pan,
\emph{Test of local realism into the past without detection and locality loopholes},
\href{https://doi.org/10.1103/PhysRevLett.121.080404}
{Physical Review Letters \textbf{121(8)}, 080404 (2018)}.


\bibitem{MY03}
D. Mayers, and A. Yao,
\emph{Self testing quantum apparatus},
\href{https://arxiv.org/abs/quant-ph/0307205}{arXiv:quant-ph/0307205}.


\bibitem{SB19}
I. Šupić, and J. Bowles, 
\emph{Self-testing of quantum systems: a review},
\href{https://arxiv.org/abs/1904.10042}{arxiv:quant-ph/1904.10042}.


\bibitem{PABG07}
A. Acín, N. Brunner, N. Gisin, S. Massar, S. Pironio, and V. Scarani,
\emph{Device-independent security of quantum cryptography against collective attacks},
\href{https://doi.org/10.1103/PhysRevLett.98.230501}
{Physical Review Letters \textbf{98(23)}, 230501 (2007)}.


\bibitem{PABG09}
S. Pironio, A. Acín, N. Brunner, N. Gisin, S. Massar, and V. Scarani,
\emph{Device-independent quantum key distribution secure against collective attacks},
\href{https://doi.org/10.1088/1367-2630/11/4/045021}
{New Journal of Physics \textbf{11(4)}, 045021 (2009)}.


\bibitem{MYS12}
M. McKague, T. H. Yang, and V. Scarani, 
\emph{Robust self-testing of the singlet},
\href{https://doi.org/10.1088/1751-8113/45/45/455304}
{Journal of Physics A: Mathematical and Theoretical \textbf{45(45)}, 455304 (2012)}.


\bibitem{Kan16}
J. Kaniewski, 
\emph{Analytic and nearly optimal self-testing bounds for the Clauser-Horne-Shimony-Holt and Mermin inequalities},
\href{https://doi.org/10.1103/PhysRevLett.117.070402}
{Physical Review Letters \textbf{117(7)}, 070402 (2016)}.


\bibitem{SBWS18}
P. Sekatski, J.-D. Bancal, S. Wagner, and N. Sangouard,
\emph{Certifying the building blocks of quantum computers from Bell’s theorem},
\href{https://doi.org/10.1103/PhysRevLett.121.180505}
{Physical Review Letters \textbf{121(18)},  180505 (2018)}.


\bibitem{Ban18}
J.-D. Bancal, K. Redeker, P. Sekatski, W. Rosenfeld, and N. Sangouard,
\emph{Device-independent certification of an elementary quantum network link},
\href{https://arxiv.org/abs/1812.09117}{arXiv:quant-ph/1812.09117}.


\bibitem{CK11}
R. Colbeck, and A. Kent, 
\emph{Private randomness expansion with untrusted devices},
\href{https://doi.org/10.1088/1751-8113/44/9/095305}
{Journal of Physics A: Mathematical and Theoretical \textbf{44(9)}, 095305 (2011)}.


\bibitem{AM16}
A. Acín, and L. Masanes, 
\emph{Certified randomness in quantum physics}, 
\href{https://doi.org/10.1038/nature20119}
{Nature \textbf{540}, 213 (2016)}.


\bibitem{WJS98}
G. Weihs, T. Jennewein, C. Simon, H. Weinfurter, and A. Zeilinger,
\emph{Violation of Bell's inequality under strict Einstein locality conditions},
\href{https://doi.org/10.1103/PhysRevLett.81.5039}
{Physical Review Letters \textbf{81(23)}, 5039 (1998)}.


\bibitem{MMB04}
D. L. Moehring, M. J. Madsen, B. B. Blinov, and C. Monroe,
\emph{Experimental Bell inequality violation with an atom and a photon},
\href{https://doi.org/10.1103/PhysRevLett.93.090410}
{Physical Review Letters \textbf{93(9)}, 090410 (2004)}.


\bibitem{exp1}
S. Gómez, A. Mattar, I. Machuca, E. S. Gómez, D. Cavalcanti, O. J. Farías, A. Acín, and G. Lima,
\emph{Experimental investigation of partially entangled states for device-independent randomness generation and self-testing protocols},
\href{https://doi.org/10.1103/PhysRevA.99.032108}
{Physical Review A \textbf{99(3)}, 032108 (2019)}.


\bibitem{exp2}
K. T. Goh, C. Perumangatt, Z. X. Lee, A. Ling, and V. Scarani,
\emph{Experimental comparison of tomography and self-testing in certifying entanglement},
\href{https://doi.org/10.1103/PhysRevA.100.022305}
{Physical Review A \textbf{100(2)}, 022305 (2019)}.


\bibitem{exp3}
E. Polino, I. Agresti, D. Poderini, G. Carvacho, G. Milani, G. B. Lemos, R. Chaves, and F. Sciarrino,
\emph{Device independent certification of a quantum delayed choice experiment},
\href{https://arxiv.org/abs/1806.00211}
{arXiv:quant-ph/1806.00211}.


\bibitem{exp4}
I. Agresti, D. Poderini, L. Guerini, M. Mancusi, G. Carvacho, L. Aolita, D. Cavalcanti, R. Chaves, and F. Sciarrino,
\emph{Experimental device-independent certified randomness generation with an instrumental causal structure},
\href{https://arxiv.org/abs/1905.02027}
{arXiv:quant-ph/1905.02027}.


\bibitem{Brunner14}
N. Brunner, D. Cavalcanti, S. Pironio, V. Scarani, and S. Wehner,
\emph{Bell nonlocality},
\href{https://doi.org/10.1103/RevModPhys.86.419} 
{Reviews of Modern Physics \textbf{86(2)}, 419--478 (2014)}.


\bibitem{He10}
Q. Y. He, E. G. Cavalcanti, M. D. Reid, and P. D. Drummond, 
\emph{Bell inequalities for continuous-variable measurements},
\href{https://doi.org/10.1103/PhysRevA.81.062106}
{Physical Review A \textbf{81(6)}, 062106 (2010)}.


\bibitem{NC02}
M. A. Nielsen, and I. Chuang,
Quantum computation and quantum information,
Cambridge University Press (2002).


\bibitem{Rosset12}
D. Rosset, R. Ferretti-Schöbitz, J.-D. Bancal, N. Gisin, and Y.-C. Liang,
\emph{Imperfect measurement settings: Implications for quantum state tomography and entanglement witnesses},
\href{https://doi.org/10.1103/PhysRevA.86.062325}
{Physical Review A \textbf{86(6)}, 062325 (2012)}.


\bibitem{Schmied16}
R. Schmied, J.-D. Bancal, B. Allard, M. Fadel, V. Scarani, P. Treutlein, and N. Sangouard,
\emph{Bell Correlations in a Bose-Einstein Condensate},
\href{https://doi.org/10.1103/PhysRevD.10.526}
{Science \textbf{352}, 441 (2016)}.


\bibitem{Gisin96}
N. Gisin, 
\emph{Hidden quantum nonlocality revealed by local filters},
\href{https://doi.org/10.1016/S0375-9601(96)80001-6}
{Physics Letters A \textbf{210(3)}, 151-156 (1996)}.


\bibitem{RHBS11}
R. Rabelo, M. Ho, D. Cavalcanti, N. Brunner, and V. Scarani, 
\emph{Device-independent certification of entangled measurements},
\href{https://doi.org/10.1103/PhysRevLett.107.050502}
{Physical Review Letters \textbf{107(5)}, 050502 (2011)}.


\bibitem{BSP18}
J.-D. Bancal, N. Sangouard, and P. Sekatski, 
\emph{Noise-resistant device-independent certification of Bell state measurements}, \href{https://doi.org/10.1103/PhysRevLett.121.250506}
{Physical Review Letters \textbf{121(25)}, 250506 (2018)}.


\bibitem{RKB18}
M. O. Renou, J. Kaniewski, and N. Brunner, 
\emph{Self-testing entangled measurements in quantum networks},
\href{https://doi.org/10.1103/PhysRevLett.121.250507}
{Physical Review Letters \textbf{121(25)}, 250507 (2018)}.


\bibitem{CS16}
D. Cavalcanti, and P. Skrzypczyk, 
\emph{Quantum steering: a review with focus on semidefinite programming},
\href{https://doi.org/10.1088/1361-6633/80/2/024001}
{Reports on Progress in Physics \textbf{80(2)}, 024001 (2016)}.


\bibitem{BCK13}
J. Barrett, R. Colbeck, and A. Kent, 
\emph{Memory attacks on device-independent quantum cryptography},
\href{https://doi.org/10.1103/PhysRevLett.110.010503}
{Physical Review Letters \textbf{110(1)}, 010503 (2013)}.


\bibitem{MR09}
J. Müller-Quade, and R. Renner, 
\emph{Composability in quantum cryptography},
\href{https://doi.org/10.1088/1367-2630/11/8/085006}
{New Journal of Physics \textbf{11(8)}, 085006 (2009)}.


\bibitem{LWW10}
L. Lydersen, C. Wiechers, C. Wittmann, D. Elser, J. Skaar, and V. Makarov,
\emph{Hacking commercial quantum cryptography systems by tailored bright illumination}, 
\href{https://doi.org/10.1038/nphoton.2010.214}
{Nat.\ Photonics \textbf{4(10)}, 686-689 (2010)}.


\bibitem{PAM10}
S. Pironio, A. Acín, S. Massar, A. B. de La Giroday, D. N. Matsukevich, P. Maunz, S. Olmschenk, D. Hayes, L. Luo, T. A. Manning, and C. Monroe,
\emph{Random numbers certified by Bell’s theorem},
\href{https://doi.org/10.1038/nature09008}
{Nature \textbf{464}, 1021 (2010)}.


\bibitem{CZY16}
Z. Cao, H. Zhou, X. Yuan, and X. Ma, 
\emph{Source-independent quantum random number generation},
\href{https://doi.org/10.1103/PhysRevX.6.011020}
{Physical Review X \textbf{6(1)}, 011020 (2016)}.


\bibitem{MS16}
C. A. Miller, and Y. Shi,
\emph{Robust protocols for securely expanding randomness and distributing keys using untrusted quantum devices},
\href{https://doi.org/10.1145/2885493}
{Journal of the ACM \textbf{63(4)}, 33 (2016)}.


\bibitem{LPM13}
C. C. W. Lim, C. Portmann, M. Tomamichel, R. Renner, and N. Gisin,
\emph{Device-independent quantum key distribution with local Bell test},
\href{https://doi.org/10.1103/PhysRevX.3.031006}
{Physical Review X \textbf{3(3)}, 031006 (2013)}.


\bibitem{VV14}
U. Vazirani, and T. Vidick, 
\emph{Fully device independent quantum key distribution},
\href{https://doi.org/10.1103/PhysRevLett.113.140501}
{Physical Review Letters \textbf{113(14)}, 140501 (2014)}.


\bibitem{ARV19}
R. Arnon-Friedman, R. Renner, and T. Vidick,
\emph{Simple and tight device-independent security proofs},
\href{https://doi.org/10.1137/18M1174726}
{SIAM Journal on Computing \textbf{48(1)}, 181-225 (2019)}.


\bibitem{Aspect82}
A. Aspect, J. Dalibard, and G. Roger, 
\emph{Experimental test of Bell's inequalities using time-varying analyzers},
\href{https://doi.org/10.1103/PhysRevLett.49.1804}
{Physical Review Letters \textbf{49(25)}, 1804 (1982)}.


\bibitem{SZ99}
Scully, M.O. and Zubairy, M.S., 
\emph{Quantum optics},
Cambridge University Press (1999).


\bibitem{FSS18}
F. Flamini, N. Spagnolo, and F. Sciarrino,
\emph{Photonic quantum information processing: a review},
\href{https://doi.org/10.1088%2F1361-6633%2Faad5b2}
{Reports on Progress in Physics \textbf{82(1)}, 016001 (2018)}.


\bibitem{SP19}
S. Slussarenko, and G. J. Pryde, 
\emph{Photonic quantum information processing: a concise review},
\href{https://arxiv.org/abs/1907.06331}{arXiv:quant-ph/1907.06331}.


\end{thebibliography}

\newpage

\section*{Appendices}
\renewcommand\thesubsection{\Alph{subsection}}
\setcounter{section}{-1}
\setcounter{subsection}{0}

\subsection{Necessary conditions for fair sampling}
\label{app:necessary}

Fair sampling is formulated referring to ideal lossless devices, so it is not experimentally testable. However, fair sampling does imply restrictions in the correlations that can be observed, and these provide some consistency checks that the experimenter can perform using the classical outcomes alone. Note that fair sampling gives an independent set of equations from those obtained from the \emph{no-signalling} condition.

The local quantum state that the one party measures in a multipartite experiment is
\begin{align}
	\rho \; = \; \Psi_{\vec{\a}_R}^{\vec{\x}_R}
\end{align}
where the indices imply that the state may depend on the measurement settings $\vec{\x}_R$ and outcomes $\vec{\a}_R$ obtained by the other parties, a phenomenon known as \emph{quantum steering}. Fair sampling for the device of a given party requires that the efficiency factorizes in a component dependent on $\x$ and in a component dependent on $\rho = \Psi_{\vec{\a}_R}^{\vec{\x}_R}$, see Proposition~\ref{prop:equivalence}. Hence:
\begin{align}
	\cE(\x, \vec{\x}_R,\vec{\a}_R) 
	\; = \; 
	\cE_C(\x) \; \cE_Q(\vec{\x}_R,\vec{\a}_R) \;.
\end{align}
If impose the \emph{strong} version of fair sampling, the dependence on $\rho$ disappears, see Proposition~\ref{prop:equivalence2}. Hence we have:
\begin{align}
	\cE(\x, \vec{\x}_R,\vec{\a}_R) 
	\; = \; 
	\cE_C(\x) \;.
\end{align}

Obviously, these are not \emph{sufficient} conditions to have fair sampling. Even when all devices have constant efficiency for all parties, $\cE(\x, \vec{\x}_R,\vec{\a}_R) = const$, the detection loophole is still open.

\subsection{Filter diagonal form}
\label{app:filter}

We consider a lossy measurement device $\cM = \ocM \circ \cF$ which satisfies fair sampling, $\cF = \cF_C \otimes \cF_Q$, and we consider the successful branch $\cF_{C,\cm}$ of $\cF_C$, which is a (non normalized) probabilistic map. Here, we show that we can use the freedom in defining the filters to bring $\cF_{C,\cm}$ to a diagonal form $\mathcal{T}_{C,\cm}$ which does not change $\x$.

The general form of the probabilistic filter $\cF_{C,\cm}$ is:
\begin{align}
	\cF_{C,\cm}(\proj{\x}) 
	\; & = \;
	\sum_{\x' \in \X} \Pr(\x'|\,\x) \, \sproj{\x'} \;,
\end{align}
where $\Pr(\x'|\x)$ are sub-normalized probabilities, i.e., $\Pr(\cm|\,\x) = \sum_{\x' \in \X} \Pr(\x'|\,\x) \leq  1$. We define a new filter $\mathcal{T}_{C,\cm}$ and a new device $\bar{\mathcal{W}}$ (with POVM elements $\bar{W}^\x_\a$) as:
\begin{align}
	& \mathcal{T}_{C,\cm}(\proj{\x}) 
	\; := \;
	\Pr(\cm|\,\x) \proj{\x} 
	\\
	& \bar{W}_\a^\x \; := \; 
	\frac{1}{\Pr(\cm|\,\x)}
	\sum_{\x'\in \X}  
	\Pr(\x'|\,\x) \, \bar{M}_\a^{\x'} \;.
\end{align}
We then have $\ocM \circ (\cF_{C,\cm} \otimes \t{Id}_Q) = \bar{\mathcal{W}} \circ (\mathcal{T}_{C,\cm} \otimes \t{Id}_Q)$\footnote{In the computation in Eq.~\eqref{eq:filter_NF} we omit the flag register $\proj{\cm}$ for sake of clarity.}:
\begin{align}
\label{eq:filter_NF}
{\Pr}_{\ocM \circ (\cF_{C,\cm} \otimes \t{Id}_Q)}(\a\, | \,\x, \rho)
	\; & = \;
	\Tr \Big(\,
	{\sum}_{\x'} \,	\Pr(\x'|\,\x) \,
	\bar{M}_\a^{\x'} \rho  
	\,\Big) \nn\\
	\; & = \;
	\Tr \Big(
	\Pr(\cm|\,\x) \; \bar{W}_\a^{\x} \, \rho  
	\,\Big) \nn\\
	& = \;
	{\Pr}_{\bar{\mathcal{W}} \circ (\mathcal{T}_{C,\cm} \otimes \t{Id}_Q)}(\a\, | \,\x, \rho) \;,
\end{align}
and one can immediately verify that the device $\bar{\mathcal{W}}$ has unit efficiency:
\begin{align}
\forall \x \in \X \qquad
	\sum_{\a \in \A} \bar{W}_\a^\x
	\; = \;  
	\sum_{\x' \in \X}  \frac{\Pr(\x'|\,\x)}{\Pr(\cm|\,\x)} \,
	\sum_{\a \in \A} \bar{M}_\a^{\x'} 
	\; = \; \proj{\cm} \otimes \Id \;.
\end{align}

\subsection{Complementary computations for Proposition 2}
\label{app:complementary}

In the following computation we plug the state $\Psi_\cm$ of Eq.~\eqref{eq:Psi'} into the measurement devices $\cM_k^\cm$ defined in Eq.~\eqref{eq:M'}, and verify that Eq.~\eqref{eq:post-selection_equality} holds. 
\begin{align}
\label{C1}
	& {\Pr}_{\overrightarrow{\cM}^\cm}\big(\vec{\a}\,|\,\vec{\x}, \Psi_\cm\big)
	\; = \;
	\big( \, \overrightarrow{\cM}^\cm \big)^{\vec{\x}}_{\vec{\a}} \ (\Psi_\cm)\\[2pt]
\label{C2}
	& = \;
	\frac{1}{\cE_C(\vec{\x}) \, \cE_Q(\Psi)}
	\Tr\Big\lbrace 
	\Big({\bigotimes}_k \bar{M}_k\Big)_{\vec{\a}} \;
	\Big[\, (\proj{\cm})^{\otimes N} 
	\otimes \cF_{C,\cm} (\proj{\vec{\x}}) \otimes 
	\cF_{Q,\cm} (\Psi ) \, 
	\Big]\,\Big\rbrace \\
\label{C3}
	& = \;
	\frac{1}{\cE_C(\vec{\x}) \, \cE_Q(\Psi)}
	\Tr\Big\lbrace 
	\Big({\bigotimes}_k \bar{M}_k\Big)_{\vec{\a}} \;
	\Big[\, (\proj{\cm})^{\otimes N} 
	\otimes \cF_{\cm} 
	(\proj{\vec{\x}} \otimes \Psi) \, 
	\Big]\,\Big\rbrace \\
%
%
\label{C5}
	& = \;
	\frac{1}{\cE_C(\vec{\x}) \, \cE_Q(\Psi)}
	\Big({\bigotimes}_k \ocM_k \circ \cF_{k} \Big)_{\vec{\a}} \;
	(\proj{\vec{\x}} \otimes \Psi) \\
\label{C6}
	& = \;
	\frac{1}{\cE_C(\vec{\x}) \, \cE_Q(\Psi)}
	\Big({\bigotimes}_k \cM_k \Big)_{\vec{\a}} \;
	(\proj{\vec{\x}} \otimes \Psi) \\
\label{C7}
	& = \;
	\frac{{\Pr}_{\overrightarrow{\cM}}(\vec{\a}\,|\,\vec{\x}, \Psi)}
	     {{\Pr}_{\overrightarrow{\cM}}(  \cm     |\,\vec{\x}, \Psi)} \;.
\end{align}
In line \eqref{C2}, we have introduced $\cF_{C,\cm} := \bigotimes_k \cF_{k,C,\cm}$ and $\cF_{C,\cm} := \bigotimes_k \cF_{k,C,\cm}$. In line \eqref{C3}, we defined $\cF_\cm := \cF_{C,\cm} \otimes \cF_{Q,\cm}$. In line \eqref{C5}, we have used the property that each device $\ocM_k$ returns a valid outcome $\a_k$ only when the filter $\cF_k$ returns $\cm$. Finally, in line \eqref{C6} and \eqref{C7} we have used the definitions $\cM_k = \ocM_k \circ \cF_k$ and $\Pr_{ \overrightarrow{\cM} } (\cm|\,\vec{\x}, \Psi) = \cE(\vec{\x},\Psi) = \cE_C(\vec{\x}) \, \cE_Q(\Psi)$.

\subsection{Proof of Proposition 4}
\label{app:proof}

We now give a constructive proof of Proposition~\ref{prop:approximate_fs}, meaning that we provide an explicit construction of a lossless device $\ocM$ for which inequality~\eqref{eq:approx_result} holds. From preliminary numerical investigations we conjecture that the bound is not tight and could be further improved. Moreover, an explicit optimization over $\ocM$ (for a given $\cM$) could provide better bounds.

We introduce some handy notation. Given an operator $M_{Q,\cm} \succcurlyeq 0$, we define:
\begin{align}
\begin{split}
	& \widetilde{M}_\a^\x
	\; := \;
	\sqrt{M_{Q,\cm}^{-1}} \,M_\a^\x\, \sqrt{M_{Q,\cm}^{-1}} \\
	& \widetilde{M}_\cm^\x
	\; := \;
	\sqrt{M_{Q,\cm}^{-1}} \,M_\cm^\x\, \sqrt{M_{Q,\cm}^{-1}} \;,
\end{split}	
\end{align}
so that condition~\eqref{eq:approx} can be compactly written as
\begin{align}
\label{eq:approx2}
	\norm{\, \Pi_\cm - 
	\frac{\widetilde{M}_\cm^\x}{\bnorm{\widetilde{M}_\cm^\x} } 
	\,}
	\;\leq\;
	\epsilon \;,
\end{align}
where we remind that $\Pi_\cm$ is the projector on the support of $M_{Q,\cm}$ and the inverse square roots are defined on its support; then, the support of $\widetilde{M}_\cm^\x$ is contained in the support of $M_{Q,\cm}$. We also require $\Pi_\cm M_\cm^\x \Pi_\cm = M_\cm^\x$; it then follows $\Pi_\cm M_\a^\x \Pi_\cm = M_\a^\x$ for all $\x$ and for all $\a$.

Now we define the filtered state $\rho_\cm$ as
\begin{align}
	\rho_\cm \; := \; 
	\frac{1}{\cE_Q(\rho)} \sqrt{M_{Q,\cm}} \, \rho \, \sqrt{M_{Q,\cm}}
\end{align}
with $\cE_Q(\rho) = \Tr(M_{Q,\cm} \rho)$. Notice that $\rho_\cm$ can be filtered probabilistically from $\rho$ and, moreover, $\Pi_\cm \rho_\cm \Pi_\cm = \rho_\cm$.

The definition of the lossless detector $\ocM$ is more laborious. We first introduce: 
\begin{align}
	\Delta^\x 
	\; := \; 
	\Pi_\cm - 
	\frac{\widetilde{M}_\cm^\x}{\bnorm{\widetilde{M}_\cm^\x}} 
\end{align}
and from inequality~\eqref{eq:approx2} it follows that:
\begin{align}
	0 \preccurlyeq \Delta^\x \preccurlyeq \epsilon \,\Pi_\cm \;.
\end{align}
We can then define $\bar{M}_\a^\x$ as:
\begin{align}
	\bar{M}_\a^\x
	\; := \;
	\frac{\widetilde{M}_\a^\x}{\bnorm{\widetilde{M}_\cm^\x}} +
	\Delta_\a^\x \;,
\end{align}
where $\{\Delta_\a^\x\}$ is an arbitrary set of positive semi-definite operators that satisfy $\sum_{\a\in\A} \Delta_\a^\x = \Delta^\x$ for all $\x$. Explicitly, we may choose $\Delta_\a^\x = \Delta^\x / |\A|$, where $|\A|$ is the size of $\A$, so that we have:
\begin{align}
	\bar{M}_\a^\x
	\; = \;
	\frac{\widetilde{M}_\a^\x}{\bnorm{\widetilde{M}_\cm^\x}} +
	\frac{1}{|\A|}
	\left(
	\Pi_\cm- \frac{\widetilde{M}_\cm^\x}{\bnorm{\widetilde{M}_\cm^\x}}
	\right) 
	\label{eq:M_bar}
\end{align}
and one can immediately verify that $\bar{M}_\a^\x \succcurlyeq 0$ and $\sum_{\a\in\A} \bar{M}_\a^\x = \Pi_\cm$.

With all the definitions in place, we proceed to verify inequality~\eqref{eq:approx_result}. First, notice that:
\begin{align}
	\frac{\Tr(M_\a^\x  \rho)}
	     {\Tr(M_\cm^\x \rho)}
	\; = \;
	\frac{\Tr(\Pi_{\cm}M_\a^\x\, \Pi_{\cm} \rho)}
	     {\Tr(\Pi_{\cm}M_\cm^\x\,\Pi_{\cm} \rho)} 
	\; = \;
	\frac{\Tr\big(\widetilde{M}_\a^\x  \rho_\cm\big)}
	     {\Tr\big(\widetilde{M}_\cm^\x \rho_\cm\big)} \;,
\end{align}
where we have used $\Pi_\cm M_\a^\x \Pi_\cm = M_\a^\x$ and $\Pi_{\cm} = M_{Q,\cm}^{1/2} \, M_{Q,\cm}^{-1/2}$. Then we have:
\begin{align}
	& \hspace{-5mm}
	\sum_{\a\in\A} \frac{1}{2}
	\left\vert\, 
	\frac{\Tr(M_\a^\x \rho)}{\Tr(M_\cm^\x \rho)} -
	\Tr(\bar{M}_\a^\x \rho_\cm)
	\,\right\vert 
	\; = \;
	\sum_{\a\in\A} \frac{1}{2}
	\left\vert\, 
	\frac{\Tr\big(\widetilde{M}_\a^\x  \rho_\cm\big)}
	     {\Tr\big(\widetilde{M}_\cm^\x \rho_\cm\big)}  -
	\Tr(\bar{M}_\a^\x \rho_\cm)
	\,\right\vert 
\label{l1} \\
	= \ &
	\frac{1}{2\,\Tr\big(\widetilde{M}_\cm^\x \rho_\cm\big)} 
	\sum_{\a\in\A} 
	\left\vert\, 
	\Tr\big(\widetilde{M}_\a^\x \rho_\cm\big) -
	\Tr\big(\bar{M}_\a^\x \rho_\cm\big) \Tr\big(\widetilde{M}_\cm^\x \rho_\cm\big)
	\,\right\vert 
\label{l2} \\
	\leq \ &
	\frac{1}{2\,(1-\epsilon) \bnorm{\widetilde{M}_\cm^\x}} 
	\sum_{\a\in\A} 
	\max_{\sigma_\a \in \{0,1\}}
	\left\vert\, 
	\Tr \left\{ \left[
	\widetilde{M}_\a^\x - 
	\left( \frac{\widetilde{M}_\a^\x}{\bnorm{\widetilde{M}_\cm^\x}} + \Delta_\a^\x \right) 
	(1 - \sigma_\a \epsilon) \,\bnorm{\widetilde{M}_\cm^\x}
	\right] \rho_\cm \right\}
	\,\right\vert 
\label{l4} \\	
	= \ &
	\frac{1}{2\,(1-\epsilon) \bnorm{\widetilde{M}_\cm^\x}} 
	\sum_{\a\in\A} 
	\max_{\sigma_\a \in \{0,1\}}
	\left\vert\, 
	\Tr\left\{ \left[\phantom{\Big\vert}
	\sigma_\a \epsilon \, \widetilde{M}_\a^\x - 
	\Delta_\a^\x \, (1 - \sigma_\a \epsilon) \,\bnorm{\widetilde{M}_\cm^\x}  
	\,\right] \rho_\cm\right\}
	\,\right\vert 
\label{l5} \\	
	\leq \ &
	\frac{1}{2\,(1-\epsilon) \bnorm{\widetilde{M}_\cm^\x}}  
	\left[\,
	\sum_{\a\in\A} \epsilon \,\Tr\big(\widetilde{M}_\a^\x \rho_\cm\big) + 
	\sum_{\a\in\A} \bnorm{\widetilde{M}_\cm^\x}\, \Tr\big(\Delta_\a^\x\, \rho_\cm\big) 
	\,\right] 
\label{l6} \\
	= \ &
	\frac{1}{2\,(1-\epsilon) \bnorm{\widetilde{M}_\cm^\x}}  
	\left[ \phantom{\Big\vert}
	\epsilon \,\Tr\big(\widetilde{M}_\cm^\x \rho_\cm\big) + 
	\bnorm{\widetilde{M}_\cm^\x}\, \Tr(\Delta^\x\, \rho_\cm) 
	\,\right] 
\label{l7} \\
	\leq \ &
	\frac{1}{2\,(1-\epsilon) \bnorm{\widetilde{M}_\cm^\x}}  
	\left[ \phantom{\Big\vert}
	\epsilon \, \bnorm{\widetilde{M}_\cm^\x} + 
	\bnorm{\widetilde{M}_\cm^\x} \, \epsilon 
	\,\right] 
\label{l8} \\
	= \ &
	\frac{\epsilon}{1-\epsilon} \;.			
\end{align}
In line \eqref{l4}, we used $(1-\epsilon) \bnorm{\widetilde{M}_\cm^\x} \leq \Tr(\widetilde{M}_\cm^\x \rho_\cm) \leq \bnorm{\widetilde{M}_\cm^\x}$ and the definition of $\bar{M_\a^\x}$. In line~\eqref{l6}, we employed the triangle inequality, then maximized two terms separately over $\sigma_\a \in \{0,1\}$, and also used the fact that $\widetilde{M}_\a^\x$ and $\Delta_\a^\x$ are positive semi-definite to get rid of the absolute values. In line~\eqref{l7}, the definitions of $\widetilde{M}_\cm^\x$ and $\Delta^\x$. Finally, in line~\eqref{l8}, $\Tr(\widetilde{M}_\cm^\x \rho_\cm) \leq \bnorm{\widetilde{M}_\cm^\x}$ and $\norm{\Delta^\x} \leq \epsilon$.

\subsection{Approximate fair sampling in multipartite settings}
\label{app:multipartite}

Suppose that in an $N$-local quantum experiment each party holds a measurement device $\cM_k$ with POVM elements $M^{\x_k}_{k,\cm}$ which approximately satisfies fair sampling, as in Eq.~\eqref{eq:approx}. With the same conventions as in Appendix~\ref{app:proof}, we introduce:
\begin{align}
	\Delta_{k}^{\x_k}
	\; := \;
	\Pi_{k,\cm} - \frac{\widetilde{M}_{k,\cm}^{\x_k}}{\big|\big| \widetilde{M}_{k,\cm}^{\x_k} \big|\big|} 
\end{align}
so that the approximate fair sampling condition is expressed as $\norm{\Delta_{k}^{\x_k}} \leq \epsilon_k$, for some small $\epsilon_k$. Equivalently, we can write:
\begin{align}
	\widetilde{M}_{k,\cm}^{\x_k} 
	\; = \; 
	\big|\big| \widetilde{M}_{k,\cm}^{\x_k} \big|\big| \,
	(\Pi_{k,\cm} - \Delta_k^{\x_k}) \;.
\end{align}
We define also the following operators for a setting $\vec{\x} := (\x_1,\ldots, \x_N)$:
\begin{align}
	\widetilde{M}_\cm^{\vec{\x}}
	\; & := \;
	\widetilde{M}_{1,\cm}^{\x_1} \otimes \cdots \otimes \widetilde{M}_{N,\cm}^{\x_N} \\
	\Pi_\cm^{\vec{\x}}
	\; & := \;
	\Pi_{1,\cm}^{\x_1} 
	\otimes \cdots \otimes
	\Pi_{N,\cm}^{\x_N} 
\end{align}

We can finally define an operator $\Delta^{\vec{\x}}$ which expresses how much the collection of $N$ measurement devices deviates from the fair sampling assumption:
\begin{align}
	\Delta^{\vec{\x}}
	\; := & \;
	\Pi_\cm - \frac{\widetilde{M}_\cm^{\vec{\x}}}{\bnorm{\widetilde{M}_\cm^{\vec{\x}}}} \nn \\
	\; = & \;
	\Pi_\cm - 
	\frac{\widetilde{M}_{1, \cm}^{\x_1}}{\big|\big|{\widetilde{M}_{1, \cm}^{\x_1}}\big|\big| } 
	\otimes \cdots \otimes 
	\frac{\widetilde{M}_{N, \cm}^{\x_N}}{\big|\big|{\widetilde{M}_{N, \cm}^{\x_N}}\big|\big| } \nn \\
	\; = & \;
	\Pi_\cm - 
	(\Pi_{1,\cm} - \Delta_1^{\x_1}) \otimes \cdots \otimes (\Pi_{N,\cm} - \Delta_N^{\x_N}) \;.	
\end{align}
From this follows the bound:
\begin{align}
	\big|\big|{\Delta^{\vec{\x}}}\big|\big|
	\; & \leq \;
	1 - (1-\epsilon_1) \cdots (1-\epsilon_N)  
	\; =: \;
	\epsilon_\t{tot} \;.
\end{align}
which is the approximate strong fair sampling condition given in Eq.~\eqref{eq:approx}, where the approximation error is $\epsilon_\t{tot}$.
Assuming that all the $\epsilon_k$ are small we have:
\begin{align}
\label{eq:epsilon_tot}
	\epsilon_\t{tot}
	\; & = \;
	\epsilon_{1} + \ldots + \epsilon_{N} \; + \;
	O \bigg(	\Big( {\sum}_k \, \epsilon_k \Big)^{\!2} \bigg) \;.
\end{align}

This means that Proposition~\ref{prop:approximate_fs} can be directly applied to a multipartite setting, and the error is upper bounded by $\epsilon_\t{tot}$, which is roughly equal to the sum of the deviations from exact fair sampling for each one of the individual measurement devices.

\subsection{Approximate fair sampling for Bell operators}
\label{app:X}

Consider $N$ spatially separated and non-communicating measurement devices that are used to measure the expectation value of a Bell operator $B$. This expectation value is entirely determined by the outcome probabilities of the measurement devices and for linear Bell inequalities it can be written as:
\begin{align}
	\langle\, B \,\rangle_{\Pr(\cdot|\cdot)}
	\; = \;
	\sum_{{\vec{\a}}, \vec{\x}} 
	c_{\vec{\a}}^{\vec{\x}} \;
	\Pr(\vec{\a}\,|\,\vec{\x}) \;,
\end{align}
for some coefficients $c_{\vec{\a}}^{\vec{\x}} \in \mathds{R}$. Two sets of probability distributions are of interest in our case: the post-selected probability distributions $\Pr_\t{p.s.}$ obtained by the device ``in the lab'', where we consider a ``good outcome'' one where all the $N$ detectors click in a given experimental round; and the probability distributions $\Pr_\t{ideal}$ that would be obtained with lossless devices. The devices satisfy an approximate fair sampling condition.

We want to bound the quantity
\begin{align}
	\frac{1}{\beta_\t{max}}\;
	\big|\, \langle\, B \,\rangle_\t{p.s.} - \langle\, B \,\rangle_\t{ideal} \,\big|
\end{align}
where $\beta_\t{max}$ is the algebraic bound for the Bell operator $B$, and is introduced here to compensate for arbitrary rescaling of the coefficients $	c_{\vec{\a}}^{\vec{\x}}$. The algebraic bound is defined as:
\begin{align}
	\beta_\t{max} 
	\; := & \;
	\max_{\{\Pr(\cdot|\cdot)\}}
	\left\vert
	\sum_{{\vec{\a}}, \vec{\x}} 
	c_{\vec{\a}}^{\vec{\x}} \;
	\Pr(\vec{\a}\,|\,\vec{\x})
	\right\vert 
\end{align}
where the maximum is over the set of probability distributions $\Pr(\vec{\a}|\vec{\x})$. We also suppose that the total variation distance between $\Pr_\t{p.s.}(\cdot|\vec{\x})$ and $\Pr_\t{ideal}(\cdot|\vec{\x})$ is bounded by $\epsilon_\t{tot}$ for all collections of settings $\vec{\x}$:
\begin{align}
	\norm{\;
	{\Pr}_\t{p.s.}(\,\cdot\,|\,\vec{\x}) - {\Pr}_\t{ideal}(\,\cdot\,|\,\vec{\x})
	\;}_{TV}	
	\; \leq \;
	\epsilon_\t{tot} \;.
\end{align}
Then we have for all $\vec{\x}$:
\begin{align}
	{\Pr}_\t{p.s.}(\,\cdot\,|\,\vec{\x}) - {\Pr}_\t{ideal}(\,\cdot\,|\,\vec{\x})
	\; = \;
	\mu_{\vec{\x}} \; q_1(\,\cdot\, |\, \vec{\x}) - 
	\mu_{\vec{\x}} \; q_2(\,\cdot\, |\, \vec{\x})
\end{align}
where $q_1(\cdot |\vec{\x})$ and $ q_2(\cdot |\vec{\x})$ are probability distributions having disjoint support (i.e., $q_1(\cdot |\vec{\x})= 0$ if $q_2(\cdot |\vec{\x})\neq 0$ and vice-versa) and $\mu_{\vec{\x}}\in [0, \epsilon_\t{tot}]$ is a re-scaling constant. We can then compute:
\begin{align}
	\frac{1}{\beta_\t{max}}\;
	\big|\, \langle\, B \,\rangle_\t{p.s.} - \langle\, B \,\rangle_\t{ideal} \,\big|
	\; & = \;
	\frac{1}{\beta_\t{max}}
	\left\vert
	\sum_{{\vec{\a}}, \vec{\x}} 
	c_{\vec{\a}}^{\vec{\x}} \;
	\Big[
	{\Pr}_\t{p.s.}(\vec{\a}\,|\,\vec{\x}) - {\Pr}_\t{ideal}(\vec{\a}\,|\,\vec{\x})
	\Big]
	\right\vert \\
	\; & = \;
	\frac{1}{\beta_\t{max}}
	\left\vert
	\sum_{{\vec{\a}}, \vec{\x}} 
	c_{\vec{\a}}^{\vec{\x}} \, \mu_{\vec{\x}} \;
	\Big[
	 q_1(\vec{\a}\, |\, \vec{\x}) - 
	q_2(\vec{\a}\, |\, \vec{\x})
	\Big]
	\right\vert \\	
\label{F1}
	\; & \leq \;
	\frac{1}{\beta_\t{max}}
	\left[\;
	\epsilon_\t{tot}
	\left\vert
	\sum_{{\vec{\a}}, \vec{\x}} 
	c_{\vec{\a}}^{\vec{\x}} \;
	q_1(\vec{\a}\, |\, \vec{\x}) 
	\right\vert +
	\epsilon_\t{tot}
	\left\vert
	\sum_{{\vec{\a}}, \vec{\x}} 
	c_{\vec{\a}}^{\vec{\x}} \;
	q_1(\vec{\a}\, |\, \vec{\x}) 
	\right\vert
	\;\right] \\
\label{F2}
	& \leq \;
	2 \, \epsilon_\t{tot} \;.
\end{align}
In line~\eqref{F1} we have used the triangle inequality and in line~\eqref{F2} the definition of $\beta_\t{max}$.

\subsection{Computation for approximate fair sampling with a polarization analyser}
\label{app:F}

We want to show that the POVM $M_\cm^\theta$ defined in Eq.~\eqref{eq:multi-photon_approx} in Section~\ref{sec:appr_fs_in_quantum_optics} satisfies approximate fair sampling when employing the operator $M_{Q,\cm} = \Id - R^{\hat{N}}$. We start computing $\big|\big|\widetilde{M}_\cm^\theta \big|\big|$, where $\widetilde{M}_\cm^\theta = (M_{Q,\cm})^{-1/2}\, M_\cm^\theta\,(M_{Q,\cm})^{-1/2}$. As usual, the inverse square roots are defined on the support of the operator $M_{Q,\cm}$, in this case on the subspace orthogonal to $\ket{\vac}$. We use $[\hat{N}, \hat{N}_\theta] = 0$, hence we have:
\begin{align}
	\widetilde{M}_\cm^\theta 
	\; & = \;
	(M_{Q,\cm})^{-1}\, M_\cm^\theta \nn\\
	& = \;
	\big(\Id - R^{\hat{N}}\big)^{-1} \,
	\big[\Id - R^{\hat{N}} (1+\delta)^{\hat{N}_\theta}\big] \;.
\end{align}
From $(1+\delta)^{\hat{N}_\theta} \succcurlyeq \Id$ we have $[\Id - R^{\hat{N}} (1+\delta)^{\hat{N}_\theta}] \preccurlyeq (\Id - R^{\hat{N}})$ and thus $\widetilde{M}_\cm^\theta \preccurlyeq \Id$, equivalently, $\big|\big| \widetilde{M}_\cm^\theta \big|\big| \leq 1$. This upper bound is saturated, e.g., by the state $\sket{\theta^\bot} = b_{\theta^\bot}^\dag \ket{\vac}$, since $\sbra{\theta^\bot}\widetilde{M}_\cm^\theta \sket{\theta^\bot} =  (1-R)^{-1} (1-R) = 1$, and thue we conclude $\big|\big| \widetilde{M}_\cm^\theta \big|\big| = 1$.

Then we can calculate
\begin{align}
\label{m1}
	\norm{\,
	\Pi_\cm -
	\frac{\widetilde{M}_\cm^\theta }
	{\big|\big|{\widetilde{M}_\cm^\theta}\big|\big|}
	\,}
	\; & = \; 
	\norm{\,
	\Pi_\cm -
	\big(\Id - R^{\hat{N}}\big)^{-1} \,
	\big[\Id - R^{\hat{N}} (1+\delta)^{\hat{N}_\theta}\big]
	\,} \\[-1mm]
\label{m2}
	\; & = \; 
	\norm{\,
	\big(\Id - R^{\hat{N}}\big)^{-1} \,
	\big[\Id - R^{\hat{N}} - \Id + R^{\hat{N}} (1+\delta)^{\hat{N}_\theta}\big]
	\,} \\[1mm]
\label{m3}
	\; & = \; 
	\norm{\,
	\big(\Id - R^{\hat{N}}\big)^{-1} \,
	\big[ R^{\hat{N}} (1+\delta)^{\hat{N}_\theta} - R^{\hat{N}}\big]
	\,} \\
\label{m4}
	\; & = \;
	\max_{n \geq 1} \
	\frac{R_1^n - R_2^n}{1 - R_2^n} \;.
\end{align}
In line~\eqref{m2} we have used $\Pi_\cm = \big(\Id - R^{\hat{N}}\big)^{-1} \, \big(\Id - R^{\hat{N}}\big)$. To obtain line~\eqref{m4} we have used the fact that the operator in line~\eqref{m3} commutes with $\hat{N}$, thus it can be written as a direct sum of operators having a definite number of photons; within an $n$-photon subspace, the operator $(\Id - R^{\hat{N}})^{-1}$ takes value $1/(1-R)^n = 1/(1-R_2)^n$, while the operator $[ R^{\hat{N}} (1+\delta)^{\hat{N}_\theta} - R^{\hat{N}}]$ attains the maximum value $[R \,(1+\delta)]^n - R^n = R_1^n -R_2^n$ when all $n$ photons are polarized in the $\theta$ direction. Maximizing over all $n$-photon subspaces (with $n \neq 0$) one obtains the value of the operator norm. But the maximum is always attained for $n=1$, since:
\begin{align}
	\frac{R_1^n - R_2^n}{1 - R_2^n} 
	\; = \;
	\frac{R_1 - R_2}{1 - R_2} \;
	\frac{ R_1^{n-1} +  R_1^{n-2} R_2 + ~\cdots~ + R_2^{n-1}}
	{\hfill 1~+\phantom{R_1^{n-2}}R_2 + ~\cdots~ + R_2^{n-1}}
	\; \leq \;
	\frac{R_1 - R_2}{1 - R_2} \;.	
\end{align}
This finally results in:
\begin{align}
	\norm{\,
	\Pi_\cm -
	\frac{\widetilde{M}_\cm^\theta }
	{\big|\big|{\widetilde{M}_\cm^\theta}\big|\big|}
	\,}
	\; & = \; 
	\frac{R_1 - R_2}{1 - R_2} 
	\; = \;
	\frac{R \, \delta}{1 - R}
	\; = \;
	\frac{1 - \eta }{\eta}\, \delta 
 	\;.	
\end{align}

\subsection{Proof of Proposition 5}
\label{app:proof2}

Consider a state $\hat\rho \in S(\hil \oplus \hil^\bot)$ as in Eq.~\eqref{eq:rho_hat}: $\Pi\, \hat \rho \,\Pi = (1-\epsilon') \rho$, $\Pi^\bot \hat \rho \,\Pi^\bot = \epsilon' \rho^\bot$, and $\Pi\, \hat \rho \,\Pi^\bot = c$, where $\Pi, \Pi^\bot$ are the projectors on the well-behaved subspace $\hil$ and on the orthogonal subspace, respectively. Preliminary, we establish the following Lemma.

\begin{lemma}
\label{lemma}
The coherence term satisfies $\norm{\,c\,}_\Tr \leq \sqrt{\epsilon' (1-\epsilon')}$.
\end{lemma}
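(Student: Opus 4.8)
The plan is to derive the bound from positivity of $\hat\rho$ restricted to two-dimensional subspaces, combined with the singular value decomposition of $c$. First I would use the elementary fact that if a block operator $\left(\begin{array}{cc} A & B \\ B^\dag & D \end{array}\right) \succcurlyeq 0$ on $\hil \oplus \hil^\bot$, then for any unit vectors $\sket{u} \in \hil$ and $\sket{v} \in \hil^\bot$ the compression to $\mathrm{span}\{\sket{u},\sket{v}\}$ — i.e.\ the $2\times 2$ matrix with entries $\sbra{u} A \sket{u}$, $\sbra{u} B \sket{v}$, $\sbra{v} B^\dag \sket{u}$, $\sbra{v} D \sket{v}$ — is positive semi-definite. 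Non-negativity of its determinant gives $|\sbra{u} B \sket{v}|^2 \leq \sbra{u} A \sket{u}\, \sbra{v} D \sket{v}$. Applied with $A = (1-\epsilon')\rho$, $B = c$, $D = \epsilon'\rho^\bot$, this yields $|\sbra{u} c \sket{v}|^2 \leq \epsilon'(1-\epsilon')\,\sbra{u}\rho\sket{u}\,\sbra{v}\rho^\bot\sket{v}$ for all such $\sket{u},\sket{v}$.

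Next I would invoke the singular value decomposition $c = \sum_i s_i \ketbra{u_i}{v_i}$ with $s_i \geq 0$ and $\{\sket{u_i}\}\subset\hil$, $\{\sket{v_i}\}\subset\hil^\bot$ orthonormal families, so that $\norm{c}_\Tr = \sum_i s_i$ and $s_i = \sbra{u_i} c \sket{v_i}$. Substituting $\sket{u}=\sket{u_i}$, $\sket{v}=\sket{v_i}$ into the previous inequality gives $s_i \leq \sqrt{\epsilon'(1-\epsilon')}\,\sqrt{\sbra{u_i}\rho\sket{u_i}}\,\sqrt{\sbra{v_i}\rho^\bot\sket{v_i}}$. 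Summing over $i$ and using the Cauchy--Schwarz inequality,
\begin{align}
	\norm{c}_\Tr
	\; \leq \;
	\sqrt{\epsilon'(1-\epsilon')}\,
	\Big( {\textstyle\sum_i} \sbra{u_i}\rho\sket{u_i} \Big)^{1/2}
	\Big( {\textstyle\sum_i} \sbra{v_i}\rho^\bot\sket{v_i} \Big)^{1/2} \;.
\end{align}
Since $\{\sket{u_i}\}$ and $\{\sket{v_i}\}$ are orthonormal (although possibly incomplete), $\sum_i \sbra{u_i}\rho\sket{u_i} \leq \Tr\rho = 1$ and likewise $\sum_i \sbra{v_i}\rho^\bot\sket{v_i} \leq \Tr\rho^\bot = 1$, which gives $\norm{c}_\Tr \leq \sqrt{\epsilon'(1-\epsilon')}$.

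I do not expect a genuine obstacle here; the only point demanding a little care is that the singular-vector families can be incomplete, so tracing $\rho$ (resp.\ $\rho^\bot$) against them produces an inequality rather than an equality — which is exactly the direction needed. As an alternative of comparable length, one could instead use the characterization that $\left(\begin{array}{cc} A & B \\ B^\dag & D \end{array}\right) \succcurlyeq 0$ forces $B = \sqrt{A}\,K\sqrt{D}$ for some contraction $K$, and then bound $\norm{c}_\Tr = \norm{\sqrt{(1-\epsilon')\rho}\,K\,\sqrt{\epsilon'\rho^\bot}}_\Tr \leq \sqrt{1-\epsilon'}\,\sqrt{\epsilon'}$ via Hölder's inequality for Schatten norms (using $\norm{XY}_\Tr \leq \norm{X}_2\norm{Y}_2$ and $\norm{\sqrt{\rho}}_2 = \norm{\sqrt{\rho^\bot}}_2 = 1$).
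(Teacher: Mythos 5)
Your proof is correct, and it takes a genuinely different route from the paper's. The paper decomposes $\hat\rho$ into an ensemble of pure states $\hat\rho = \sum_i p_i \sproj{\hat\psi_i}$, splits each $\sket{\hat\psi_i}$ into its $\hil$ and $\hil^\bot$ components with weights $1-\epsilon_i$ and $\epsilon_i$, writes $c = \sum_i p_i\sqrt{\epsilon_i(1-\epsilon_i)}\,\sket{\psi_i}\!\sbra{\psi_i^\bot}$, and then applies the triangle inequality followed by Cauchy--Schwarz on the weights $p_i\epsilon_i$ and $p_i(1-\epsilon_i)$. You instead work directly with the block structure: positivity of the $2\times2$ compressions gives $|\sbra{u}c\sket{v}|^2 \leq \epsilon'(1-\epsilon')\sbra{u}\rho\sket{u}\sbra{v}\rho^\bot\sket{v}$, and the singular value decomposition of $c$ plus Cauchy--Schwarz on the singular values finishes the job; your observation that the possibly incomplete singular-vector families only help (yielding $\sum_i \sbra{u_i}\rho\sket{u_i} \leq \Tr\rho$) is exactly the right point of care. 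The paper's argument is slightly more self-contained (it needs only $\norm{\sket{\psi}\!\sbra{\phi}}_\Tr = 1$ and scalar Cauchy--Schwarz), while yours avoids any choice of ensemble decomposition and exposes the structural reason for the bound; your alternative via the contraction characterization $c = \sqrt{(1-\epsilon')\rho}\,K\sqrt{\epsilon'\rho^\bot}$ together with $\norm{XY}_\Tr \leq \norm{X}_2\norm{Y}_2$ is the cleanest of the three and generalizes immediately to $\norm{c}_\Tr \leq \sqrt{\Tr A\,\Tr D}$ for any positive block operator.
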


\begin{proof}
We write $\hat\rho$ as a convex mixture of pure states, $\hat\rho = \sum_i p_i \sproj{\hat\psi_i}$, and define 
\begin{align}
	\sqrt{1-\epsilon_i} \ket{\psi_i} 
	\; := \; 
	\Pi \, \sket{\hat\psi_i} 
	\qquad
	\sqrt{\epsilon_i} \ket{\psi_i^\bot} 
	\; := \; 
	\Pi^\bot \sket{\hat\psi_i} \;.
\end{align}
We then have $\sum_i p_i \epsilon_i = \epsilon'$, $\sum_i p_i (1 -\epsilon_i) = 1- \epsilon'$ and $c = \sum_i p_i \sqrt{\epsilon_i(1-\epsilon_i)} \sket{\psi_i}\!\sbra{\psi_i^\bot}$. Using the triangle and Cauchy-Schwarz inequalities, we obtain the required result:
\begin{align}
	\norm{\,c\,}_\Tr 
	\; & \leq \;
	\sum_i p_i \sqrt{\epsilon_i(1-\epsilon_i)}
	\;\norm{\,\sket{\psi_i}\!\sbra{\psi_i^\bot}\,}_\Tr 
	\\
	\; & = \;
	\sum_i \sqrt{p_i \phantom{\vert\!}\epsilon_i} \sqrt{p_i \, (1-\epsilon_i)} 
	\\
	\; & \leq \;
	\sqrt{\sum_i p_i \, \epsilon_i } \
	\sqrt{\sum_i p_i \, (1-\epsilon_i) } 
	\\
	\; & = \; \sqrt{\epsilon'} \sqrt{1-\epsilon'}
	\;.
\end{align}
\end{proof}

Now, we prove the following inequality:
\begin{align}
\label{G0}
	\sum_{\a \in \A}
	\left\vert\,
	\Tr(\hat{M}_\a^\x \hat{\rho}) - \Tr(M_\a^\x \rho)
	\,\right\vert
	\; & \leq \;
	2 \norm{c}_\Tr \, + \, 2\epsilon' \;.
\end{align}
From $\rho = \Pi \rho \Pi$ and $M_\a^\x = \Pi \hat{M}_\a^\x \Pi$ it follows that $\Tr(M_\a^\x \rho) = \Tr(\hat{M}_\a^\x \rho)$. Thus, we have:
\begin{align}
	\sum_{\a \in \A}
	\left\vert\,
	\Tr(\hat{M}_\a^\x \hat{\rho}) - \Tr(M_\a^\x \rho)
	\,\right\vert
	\; & = \;
	\sum_{\a \in \A}
	\left\vert\,
	\Tr\big[\,\hat{M}_\a^\x (\hat{\rho} - \rho) \,\big]
	\,\right\vert 
	\\
\label{G1}
	\; & = \;
	\sum_{\a \in \A^+}
	\Tr\big[\,\hat{M}_\a^\x (\hat{\rho} - \rho) \,\big]
	+
	\sum_{\a \in \A^-}
	\Tr\big[\,\hat{M}_\a^\x (\rho - \hat{\rho}) \,\big]
	\\
\label{G2}
	\; & = \;
	\Tr\big[\,(\hat{M}_+^\x - M_-^\x) (\hat{\rho} - \rho) \,\big]
	\\[2mm]
	\; & \leq \;
	\norm{\,\hat{\rho} - \rho\,}_\Tr
\end{align}	
In line~\eqref{G1} we split the sum over subsets $\A^+$ and $\A^-$ where $\Tr[\hat{M}_\a^\x (\hat{\rho} - \rho)]$ is, respectively, positive and negative, and in line~\eqref{G2} we defined $\hat{M}_\pm^\x := \sum_{\a \in \A^\pm} \hat{M}_\a^\x$. We then used the operational characterization of the trace norm, $\norm{\sigma}_\Tr = \max_X \Tr(X\sigma)$, where the maximum is taken over hermitian matrices $X$ with $-\Id \preccurlyeq X \preccurlyeq \Id$. We then have:
\begin{align}
\label{G3}
	\norm{\,\hat{\rho} - \rho\,}_\Tr
	\; & = \;
	\norm{\,
	\epsilon' \rho^\bot - \epsilon' \rho + c + c^\dag 
	\,}_\Tr 
	\\[1mm]
\label{G4}
	\; & \leq \;
	2 \norm{\,c \,}_\Tr \, + \,
	\epsilon' \norm{\, \rho^\bot - \rho \,}_\Tr
	\\[1mm]
\label{G5}
	\; & \leq \;
	2 \norm{\,c \,}_\Tr \, + \, 2 \,\epsilon' \;.
\end{align}

Next, using the shorthand $f_\a = \Tr(\hat{M}_\a^\x \hat{\rho})$, $g_\a = \Tr(M_\a^\x \rho)$, $f = \sum_{\a} f_\a$, and $g = \sum_{\a} g_\a$, we have:
\begin{align}
	\left\vert\,
	f - g
	\,\right\vert
	\; \leq \; 
	\sum_{\a \in \A}
	\left\vert\,
	f_\a - g_\a
	\,\right\vert
	\; \leq \;
	2 \norm{\,c\,}_\Tr + 2\,\epsilon'
\end{align}

Finally, we can prove inequality~\eqref{eq:result2}:
\begin{align}
	\sum_{\a\in\A} \frac{1}{2}
	\left\vert\, 
	\frac{\Tr(\hat{M}_\a^\x \hat\rho)}{\Tr(\hat{M}_\cm^\x \hat\rho)} -
	\frac{\Tr(M_\a^\x \rho)}{\Tr(M_\cm^\x \rho)}
	\,\right\vert
	\; & = \;
	\sum_{\a\in\A}
	\frac{1}{2}
	\left\vert\,
	\frac{f_\a}{f} - \frac{g_\a}{g}
	\,\right\vert
	\\
	\; & = \;
	\sum_{\a\in\A}
	\frac{1}{2f}
	\left\vert\,
	f_\a - g_\a + g_\a - g_\a \frac{f}{g}
	\,\right\vert 
	\\
	\; & \leq \;	
	\sum_{\a\in\A}
	\frac{1}{2f}
	\left\vert\,
	f_\a - g_\a
	\,\right\vert +
	\sum_{\a\in\A}
	\frac{g_\a}{2f}
	\left\vert\,
	1 - \frac{f}{g}
	\,\right\vert
	\\
	\; & \leq \;
	\sum_{\a\in\A}
	\frac{1}{2f}
	\left\vert\,
	f_\a - g_\a
	\,\right\vert +
	\frac{1}{2f}
	\left\vert\,
	g - f
	\,\right\vert 
	\\
	\; & \leq \;
	\frac{1}{f} 
	\left(
	2\norm{\,c\,}_\Tr + 2\,\epsilon'
	\right) \;.
\end{align}
By symmetry under exchange of $f_\a$ and $g_\a$, we also have $\frac{1}{g}(2\norm{\,c\,}_\Tr + 2\,\epsilon')$ as an upper bound. Recognizing that $f = \Tr(\hat{M}_\cm^\x\hat\rho)$ and $g = \Tr(M_\cm^\x \rho)$, the proof is concluded.

\end{document}